\newtheorem{theorem}{Theorem}
\newtheorem{lemma}[theorem]{Lemma}
\newtheorem{corollary}[theorem]{Corollary}
\newtheorem{definition}[theorem]{Definition}
\newtheorem{main-result}{Main Result}
\newtheorem{claim}[theorem]{Claim}
\newtheorem{observation}{Observation}
\renewcommand{\epsilon}{\varepsilon}
\newcommand{\OPT}{\mbox{OPT}}
\newcommand{\SW}{\mbox{SW}}
\newcommand{\RR}{\mathbb{R}}
\newcommand{\Var}{\mathrm{Var}}
\newif\ifnotes
\newcommand{\bojana}[1]{\textcolor{blue}{{\footnotesize #1}}\marginpar{\raggedright\tiny \textcolor{purple}{Bojana}}}
\definecolor{green}{rgb}{0,1,0}
\newcommand{\thomas}[1]{\textcolor{red}{{\footnotesize #1}}\marginpar{\raggedright\tiny \textcolor{red}{Thomas}}}
\newcommand{\bojana}[1]{}
\newcommand{\thomas}[1]{}
\newcommand{\tk}[1]{\textcolor{black}{#1}}
\newcommand{\bk}[1]{\textcolor{black}{#1}}
\DeclareRobustCommand{\rvdots}{%
  \vbox{
    \baselineskip4\p@\lineskiplimit\z@
    \kern-\p@
    \hbox{.}\hbox{.}\hbox{.}
  }}
\title{Price of Anarchy for Mechanisms with Risk-Averse Agents
\footnote{This work is supported by DFG through Cluster of Excellence MMCI.}
}
\author{Thomas Kesselheim\thanks{University of Bonn, Institute of Computer Science, Bonn, Germany {\tt thomas.kesselheim@uni-bonn.de}} \and
Bojana Kodric\thanks{MPI for Informatics and Saarland University, Saarbr\"ucken, Germany {\tt bojana@mpi-inf.mpg.de}}}
\date{}
\begin{document}

\maketitle

\begin{abstract}
We study the price of anarchy of mechanisms in the presence of risk-averse agents. Previous work has focused on agents with quasilinear utilities, possibly with a budget. Our model subsumes this as a special case but also captures that agents might be less sensitive to payments than in the risk-neutral model. We show that many positive price-of-anarchy results proved in the smoothness framework continue to hold in the more general risk-averse setting.
A sufficient condition is that agents can never end up with negative quasilinear utility after playing an undominated strategy.
This is true, e.g., for first-price and second-price auctions. For all-pay auctions, similar results do not hold: We show that there are Bayes-Nash equilibria with arbitrarily bad social welfare compared to the optimum.
 \end{abstract}


\section{Introduction}\label{sec:introduction}
Many practical, ``simple'' auction mechanisms are not incentive compatible, making it beneficial for agents to behave strategically. A standard example is the first-price auction, in which one item is sold to one of $n$ agents. Each of these agents is asked to report a valuation; the item is given to the agent reporting the highest value, who then has to pay what he/she reported.
A common way to understand the effects of strategic behavior is to study resulting equilibria and to bound the \emph{price of anarchy}. That is, one compares the social welfare that is achieved at the (worst) equilibrium of the induced game to the maximum possible welfare. Typical equilibrium concepts are Bayes-Nash equilibria and (coarse) correlated equilibria, which extend mixed Nash equilibria toward incomplete information or learning settings respectively.
A key assumption in these analyses is that agents are \emph{risk neutral}: Agents are assumed to maximize their expected quasilinear utility, which is defined to be the difference of the value associated to the outcome and payment imposed to the agent. So, an agent having a value of $1$ for an item would be indifferent between getting this item with probability $10 \%$ for free and getting it all the time, paying $0.9$.

However, there are many reasons to believe that agents are not risk neutral. For instance, in the above example the agent might favor the certain outcome to the uncertain one. Therefore, in this paper, we ask the question: \emph{What ``simple'' auction mechanisms preserve good performance guarantees in the presence of risk-averse agents?}

The standard model of risk aversion in economics (see, e.g., \cite{mas1995microeconomic}) 
is to apply a (weakly) concave function to the quasilinear term. That is, if agent $i$'s outcome is $x_i$ and his payment is $p_i$, his utility is given as $u_i(x_i, p_i) = h_i(v_i(x_i) - p_i)$, where $h_i\colon \RR \to \RR$ is a weakly concave, monotone function. That is, for $y,y'\in\mathbb{R}$ and for all $\lambda \in [0, 1]$, it holds that $h_i(\lambda y + (1-\lambda) y') \geq \lambda h_i(y) + (1-\lambda) h_i(y')$. Agent $i$ is risk neutral if and only if $h_i$ is a linear function. If the function is strictly concave, this has the effect that, by Jensen's inequality, the utility for fixed $x_i$ and $p_i$ is higher than for a randomized $x_i$ and $p_i$ with the same expected $v_i(x_i) - p_i$.

We compare outcomes based on their social welfare, which is defined to be the sum of utilities of all involved parties including the auctioneer. That is, it is the sum of agents' utilities and their payments $\SW(\mathbf{x}, \mathbf{p}) = \sum_i u_i(x_i, p_i) + \sum_i p_i$. In the quasilinear setting this definition of social welfare coincides with the sum of values $\sum_i v_i(x_i)$. With risk-averse utilities they usually differ. However, all our results bound the sum of values and therefore also hold for this benchmark. 

We assume that the mechanisms are oblivious to the $h_i$-functions and work like in the quasilinear model. Only the individual agent's perception changes. This makes it necessary to normalize the $h_i$-functions because otherwise they could be on different scales, e.g., if $h_1(y) = y$ and $h_2(y) = 1000 \cdot y$, which would be impossible for the mechanism to cope with without additional information. Therefore, we will assume that $u_i(\mathbf{x},p_i) = v_i(\mathbf{x})$ if $p_i = 0$ and that $u_i(\mathbf{x},p_i) = 0$ if $p_i = v_i$. That is, for the two cases that $p_i$ is either $0$ or the full value, the utility matches exactly the quasilinear one. However, due to risk aversion, the agents might be less sensitive to payments. \footnote{We note here that this will not in turn allow the mechanism to arrive at huge utility gains, as compared to the quasilinear model, for example, by increasing payments arbitrarily.
Indeed, Lemma~\ref{lemma:OPT_relation} in Section~\ref{sec:model} will show that the difference between the two optima is bounded by at most a multiplicative factor of $2$.}


\subsection{Our Contribution}

We give bounds on the price of anarchy for Bayes-Nash and (coarse) correlated equilibria of mechanisms in the presence of risk-averse agents. Our positive results are stated within the smoothness framework, which was introduced by \cite{Roughgarden09}. We use the version that is tailored to quasilinear utilities by \cite{SyrgkanisT13}, which we extend to mechanism settings with general utilities (for a formal definition see Section~\ref{sec:general_smoothness}). Our main positive result states that the loss of performance compared to the quasilinear setting is bounded by a constant if a slightly stronger smoothness condition is fulfilled.

\begin{main-result}
\label{main-result-smooth}
\label{MAIN-RESULT-SMOOTH}
Given a mechanism with price of anarchy $\alpha$ in the quasilinear model provable via smoothness such that the deviation guarantees non-negative utility, then this mechanism has price of anarchy at most $2 \alpha$ in the risk-averse model.
\end{main-result}

This result relies on the fact that the deviation action to establish smoothness guarantees agents non-negative utility. A sufficient condition is that all undominated strategies never have negative utility. First-price and second-price auctions satisfy this condition, we thus get constant price-of-anarchy bounds for both of these auction formats.

In an all-pay auction every positive bid can lead to negative utility. Therefore, the positive result does not apply. As a matter of fact, this is not a coincidence because, as we show, equilibria can be arbitrarily bad.

\begin{main-result}
\label{main-result-allpay}
\label{MAIN-RESULT-ALLPAY}
The single-item all-pay auction has unbounded price of anarchy for Bayes-Nash equilibria, even with only three agents.
\end{main-result}


This means that although equilibria of first-price and all-pay auctions have very similar properties with quasilinear utilities, in the risk-averse setting they differ a lot. We feel that this to some extent matches the intuition that agents should be more reluctant to participate in an all-pay auction compared to a first-price auction.

In our construction for proving Main Result~\ref{main-result-allpay}, we give a symmetric Bayes-Nash equilibrium for two agents. The equilibrium is designed in such a way that a third agent of much higher value would lose with some probability with every possible bid. Losing in an all-pay auction means that the agent has to pay without getting anything, resulting in negative utility. In the quasilinear setting, this negative contribution to the utility would be compensated by respective positive amounts when winning. For the risk-averse agent in our example, this is not true. Because of the risk of negative utility, he prefers to opt out of the auction entirely.

We also consider a different model of aversion to uncertainty, in which solution concepts are modified. Instead of evaluating a distribution over utilities in terms of their expectation, agents evaluate them based on the expectation minus a second-order term. We find that this model has entirely different consequences on the price of anarchy. For example, the all-pay auction has a constant price of anarchy in correlated and Bayes-Nash equilibria, whereas the second-price auction can have an unbounded price of anarchy in correlated equilibria.

\subsection{Related Work}\label{subsec:related}


Studying the impact of risk-averseness is a regularly reoccurring theme in the literature. A proposal to distinguish between money and the utility of money, and to model risk aversion by a utility function that is concave first appeared in~\cite{10.2307/1909829}. The expected utility theory, which basically states that the agent's behavior can be captured by a utility function and the agent behaves as a maximizer of the expectation of his utility, was postulated in~\cite{zbMATH03106184}. This theory does not capture models that are standardly used in portfolio theory, ``expectation minus variance'' or ``expectation minus standard deviation''~\cite{markowitz1968portfolio}, the latter of which we also consider in Section~\ref{sec:other-models}.

In the context of mechanisms, one usually models risk aversion by concave utility functions. One research direction in this area is to understand the effects of risk aversion on a given mechanism. For example, \cite{DBLP:journals/ijgt/FibichGS06} studies symmetric equilibria in all-pay auctions with a homogenous population of risk-averse players. \tk{Due to symmetry and homogeneity, in this case, equilibria are fully efficient, that is, the price of anarchy is $1$.}
In~\cite{mathews2006} a similar analysis for auctions with a buyout option is performed; \cite{DBLP:conf/wine/HoyIL16} considers customers with heterogeneous risk attitudes in mechanisms for cloud resources. In~\cite{DBLP:conf/sigecom/DuttingKT14} it is shown that for certain classes of mechanisms the correlated equilibrium is unique and has a specific structure depending on the respective valuations but independent of the actual utility function. One consequence of this result is that risk aversion does not influence the allocation outcome or the revenue.

Another direction is to design mechanisms for the risk-averse setting. For example, the optimal revenue is higher because buyers are less sensitive to payments. In a number of papers, mechanisms for revenue maximization are proposed \cite{matthews1983,maskin1984optimal,DBLP:conf/sigecom/SundararajanY10,hu2010risk,bhalgat2012mechanism,fu2013prior}. Furthermore, randomized mechanisms that are \emph{truthful in expectation} lose their incentive properties if agents are not risk neutral. Black-box transformations from truthful-in-expectation mechanisms into ones that fulfill stronger properties are given in~\cite{DBLP:journals/corr/abs-1206-2957} and~\cite{DBLP:journals/teco/HoeferKV16}.

Studying the effects of risk aversion also has a long history in game theory, where different models of agents' attitudes towards risk are analyzed. One major question is, for example, if equilibria still exist and if they can be computed \cite{DBLP:conf/sagt/FiatP10,hoy2012concavity}. Price of anarchy analyses have so far only been carried out for congestion games. Tight bounds on the price of anarchy for atomic congestion games with affine cost functions under a range of risk-averse decision models are given in~\cite{DBLP:journals/teco/PiliourasNS16}.

The smoothness framework was introduced by \cite{Roughgarden09}. Among others, \cite{SyrgkanisT13} tailored it to the quasilinear case of mechanisms. It is important to remark here that our approach is different from the one taken by \cite{DBLP:journals/sigmetrics/MeirP15}. They use the smoothness framework to prove generalized price of anarchy bounds for \bk{nonatomic congestion} games in which players have biased utility functions. They assume that players are playing the ``wrong game'' and their point of comparison is the ``true'' optimal social welfare, \bk{meaning that the biases only determine the equilibira but do not affect the social welfare.} We take the utility functions as they are, including the risk aversion, to evaluate social welfare \bk{in equilibria and also to determine the optimum, which makes our models incomparable.}

For precise relation of von Neumann-Morgenstern preferences to mean-variance preferences, see for instance~\cite{DBLP:journals/eor/Markowitz14}. Mean-variance preferences were explored for congestion games in~\cite{DBLP:journals/ior/NikolovaM14,DBLP:conf/sigecom/NikolovaS15}, while~\cite{klose2014auctioning} studies the bidding behavior in an all-pay auction depending on the level of variance-averseness.

\section{Preliminaries}\label{sec:preliminaries}
\subsection{Setting}
We consider the following setting: There is a set $N$ of $n$ players and $\mathcal{X}$ is the set of possible outcomes. Each player $i$ has a utility function $u_i^{\theta_i}$, which is parameterized by her type $\theta_i \in \Theta_i$. Given a type $\theta_i$, an outcome $\mathbf{x} \in \mathcal{X}$, and a payment $p_i \geq 0$, her utility is $u_i^{\theta_i}(\mathbf{x}, p_i)$. The traditionally most studied case are quasilinear utilities, in which types are valuation functions $v_i\in\mathcal{V}_i$, $v_i\colon \mathcal{X} \to \mathbb{R}$ and $u_i^{v_i}(\mathbf{x},p_i)=v_i(\mathbf{x})-p_i$. Throughout this paper, we will refer to quasilinear utilities by $\hat{u}_i^{v_i}$.

For fixed utility functions and types, the social welfare of an outcome $\mathbf{x} \in \mathcal{X}$ and payments $(p_i)_{i \in N}$ is defined as $\SW^{\boldsymbol{\theta}}(\mathbf{x},\mathbf{p}):=\sum_{i \in N} u_i^{\theta_i}(\mathbf{x}, p_i) + \sum_{i \in N} p_i$. In the quasilinear case, this simplifies to $\sum_{i \in N} v_i(\mathbf{x})$. Unless noted otherwise, by $\OPT(\boldsymbol\theta)$, we will refer to the optimal social welfare under type profile $\boldsymbol\theta$, i.e., $\max_{\mathbf{x}, \mathbf{p}} \SW^{\boldsymbol\theta}(\mathbf{x}, \mathbf{p})$.

A \emph{mechanism} is a triple $(\mathcal{A}, X, P)$, where for each player $i$, there is a set of actions $\mathcal{A}_i$ and $\mathcal{A}=\times_i\mathcal{A}_i$ is the set of action profiles,
$X\colon\mathcal{A}\to \mathcal{X}$ is an allocation function that maps actions to outcomes and $P\colon\mathcal{A}\to \mathbb{R}_+^n$ is a payment function that maps actions to payments $p_i$ for each player $i$. Given an action profile $\mathbf{a} \in \mathcal{A}$, we will use the short-hand notation $u_i^{\theta_i}(\mathbf{a})$ to denote $u_i^{\theta_i}(X(\mathbf{a}), p_i)$.

\subsection{Solution Concepts}
In the setting of \emph{complete information}, the type profile $\boldsymbol\theta$ is fixed. We consider (coarse) correlated equilibria, which generalize Nash equilibria and are the outcome of (no-regret) learning dynamics. A \emph{correlated equilibrium (CE)} is a distribution $\bf a$ over action profiles from $\mathcal{A}$ such that for every player $i$ and every strategy $a_i$ in the support of $\mathbf{a}$ and every action $a_i'\in \mathcal{A}_i$, player $i$ does not benefit from switching to $a_i'$ whenever he was playing $a_i$. Formally,
\[
\mathbb{E}_{\mathbf{a}_{-i}\vert a_i}[u_i(\mathbf{a})]\ge \mathbb{E}_{\mathbf{a}_{-i}\vert a_i}[u_i(a_i', \mathbf{a}_{-i})], \forall a_i'\in\mathcal{A}_i, \forall i\enspace.
\]

In \emph{incomplete information}, the type of each player is drawn from a distribution $F_i$ over her type space $\Theta_i$. The distributions are common knowledge and the draws are independent among players. The solution concept we consider in this setting is the \emph{Bayes-Nash Equilibrium}. Here, the strategy of each player is now a (possibly randomized) function $s_i\colon \Theta_i \to \mathcal{A}_i$. The equilibrium is a distribution over these functions $s_i$ such that each player maximizes her expected utility conditional on her private information. Formally,
\[
\mathbb{E}_{\boldsymbol\theta_{-i}\vert \theta_i}[u_i^{\theta_i}(\mathbf{s}(\boldsymbol\theta))] \ge \mathbb{E}_{\boldsymbol\theta_{-i}\vert \theta_i}[u_i^{\theta_i}(a_i,\mathbf{s}_{-i}(\boldsymbol\theta_{-i}))], \forall a_i\in \mathcal{A}_i, \forall \theta_i\in \Theta_i, \forall i\enspace.
\]

The measure of efficiency is the expected social welfare over the types of the players: given a strategy profile $\mathbf{s}\colon\times_i \Theta_i \to \times_i \mathcal{A}_i$, we consider $\mathbb{E}_{\boldsymbol\theta}[\SW^{\boldsymbol\theta}(\mathbf{s}(\boldsymbol\theta))]$. We compare the efficiency of our solution concept with respect to the expected optimal social welfare $\mathbb{E}_{\boldsymbol\theta}[\OPT(\boldsymbol\theta)]$.

The \emph{price of anarchy (PoA)} with respect to an equilibrium concept is the worst possible ratio between the optimal expected welfare and the expected welfare at equilibrium, that is
\[
\mathrm{PoA} = \max_F \max_{D \in EQ(F)}\frac{\mathbb{E}_{\boldsymbol\theta \sim F}[\OPT(\boldsymbol\theta)]}{\mathbb{E}_{\boldsymbol\theta \sim F, \mathbf{a} \sim D}[\SW^{\boldsymbol\theta}(\mathbf{a})]}\enspace,
\]
where by $F=F_1\times\dots\times F_n$ we denote the product distribution of the players' type distributions and by $EQ(F)$ the set of all equilibria, which are probability distributions over action profiles.

We assume that players always have the possibility of not participating, hence any rational outcome has non-negative utility in expectation over the non-available information and the randomness of other players and the mechanism.

\section{Modeling Risk Aversion}\label{sec:model}
When modeling risk aversion, one wants to capture the fact that a random payoff (lottery) $X$ is less preferred than a deterministic one of value $\mathbf{E}[X]$. The standard approach is, therefore, to apply a concave non-decreasing function $h\colon \mathbb{R} \to \mathbb{R}$ to X and consider $h(X)$ instead. By Jensen's inequality, we now know $\mathbf{E}[h(X)] \leq h(\mathbf{E}[X])$.

In the case of mechanism design, the utility of a risk-neutral agent is defined as the quasilinear utility $v_i(\mathbf{x}) - p_i$. That is, if an agent has a value of $1$ for an item and has to pay $0.9$ for it, then the resulting utility is $0.1$. The expected utility is identical if the agent only gets the item with probability $0.1$ for free. To capture the effect that the agent prefers the certain outcome to the uncertain one, we again apply a concave function $h_i\colon \mathbb{R} \to \mathbb{R}$ to the quasilinear term $v_i(\mathbf{x}) - p_i$. We then consider utility functions $u_i(\mathbf{x}, p_i) = h_i(v_i(\mathbf{x}) - p_i)$ in the setting described in Section~\ref{sec:preliminaries}. Note that the mechanisms we consider do not know the $h_i$-functions. They work as if all utility functions were quasilinear.


We want to compare outcomes based on their social welfare. We use the definition of social welfare being the sum of utilities of all involved parties including the auctioneer. That is, $\SW(\mathbf{x}, \mathbf{p}) = \sum_{i \in N} u_i(\mathbf{x}, p_i) + \sum_{i \in N} p_i$. It is impossible for any mechanism to choose good outcomes for this benchmark if the $h_i$-function are arbitrary and unknown. Therefore, we assume that utility functions are normalized so that the utility matches the quasilinear one for $p_i = 0$ and $p_i = v_i(\mathbf{x})$ (see Figure~\ref{fig:risk_averse_utility}). In more detail, we assume the following \emph{normalized risk-averse utilities}:
\begin{enumerate}
\item \label{prop:monotonicity} $u_i^{v_i}(\mathbf{x}, p_i) \geq u_i^{v_i}(\mathbf{x}, p_i')$ if $p_i \leq p_i'$ (monotonicity)
\item \label{prop:first} $u_i^{v_i}(\mathbf{x}, p_i)=0$ if $p_i=v_i(\mathbf{x})$ (normalization at $p_i=v_i(\mathbf{x})$)
\item \label{prop:second} $u_i^{v_i}(\mathbf{x},p_i)=v_i(\mathbf{x})$ if $p_i=0$ (normalization at $p_i=0$)
\item \label{prop:third} $u_i^{v_i}(\mathbf{x},p_i)\geq v_i(\mathbf{x})-p_i$ if $0\leq p_i\leq v_i(\mathbf{x})$ and $u_i^{v_i}(\mathbf{x},p_i)\leq v_i(\mathbf{x})-p_i$ otherwise (relaxed concavity)
\end{enumerate}

Assumption~\ref{prop:third} is a relaxed version of concavity that suffices our needs for the positive results. Our negative results always fulfill concavity.

\begin{SCfigure}
\centering
\begin{tikzpicture}[yscale=0.5]
\draw[->] (-1,0) -- (4.2,0) node[anchor=north] {$p_i$};
      \draw[->] (0,-1) -- (0,4.2) node[anchor=east] {$u_i(\mathbf{x}, p_i)$};
      \draw[scale=0.5, font=\footnotesize] (6,0) node[anchor=north east] {$v_i(\mathbf{x})$};
      \draw[scale=0.5, font=\footnotesize] (0,6) node[anchor=east] {$v_i(\mathbf{x})$};
      \draw[scale=0.5, domain=0:6.25,smooth,variable=\x] plot ({\x},{6-\x});
      \draw[scale=0.5, domain=0:6.25,smooth,variable=\x, very thick]  plot ({\x},{6/(1-e^(-6)) * (1-e^(-6+\x))});
\end{tikzpicture}
\caption{Normalized risk-averse utility function (bold) and quasilinear utility function for a fixed allocation $\mathbf{x}$ and varying payment $p_i$.}
\label{fig:risk_averse_utility}
\end{SCfigure}
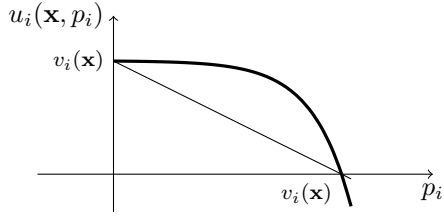

As an effect of normalization, the optimal social welfare of the risk-averse setting can be bounded in terms of the optimal sum of values, which coincides with the social welfare for quasilinear utilities.

\begin{lemma}\label{lemma:OPT_relation}
Given valuation functions $(v_i)_{i \in N}$ and normalized risk-averse utilities $(u_i^{v_i})_{i \in N}$, let $\OPT$ denote the optimal social welfare with respect to utilities $(u_i^{v_i})_{i \in N}$ and $\widehat\OPT$ denote the optimal social welfare with respect to quasilinear utilities $(\hat{u}_i^{v_i})_{i \in N}$. Then, $\OPT \leq 2 \widehat\OPT$.
\end{lemma}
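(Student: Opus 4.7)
The plan is to reduce the statement to a pointwise, per-agent inequality. First, I note that in the quasilinear setting the payments cancel in the welfare expression, so $\widehat\OPT = \max_{\mathbf{x}} \sum_i v_i(\mathbf{x})$. It therefore suffices to show, for every allocation $\mathbf{x}$ and every feasible payment vector $\mathbf{p} \in \mathbb{R}_+^n$, the estimate
\[
\SW^{\boldsymbol\theta}(\mathbf{x},\mathbf{p}) \;=\; \sum_{i \in N} \bigl( u_i^{v_i}(\mathbf{x}, p_i) + p_i \bigr) \;\leq\; 2 \sum_{i \in N} v_i(\mathbf{x}),
\]
because then maximizing over $(\mathbf{x},\mathbf{p})$ on the left gives $\OPT$, while $\sum_i v_i(\mathbf{x}) \leq \widehat\OPT$ for every fixed $\mathbf{x}$.

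The pointwise bound further reduces to the per-agent claim $u_i^{v_i}(\mathbf{x}, p_i) + p_i \leq 2 v_i(\mathbf{x})$, which I plan to verify by a short two-case analysis in $p_i$. If $p_i > v_i(\mathbf{x})$, then the ``otherwise'' half of relaxed concavity (Property~\ref{prop:third}) yields $u_i^{v_i}(\mathbf{x},p_i) \leq v_i(\mathbf{x}) - p_i$, so $u_i^{v_i}(\mathbf{x}, p_i) + p_i \leq v_i(\mathbf{x})$. If instead $0 \leq p_i \leq v_i(\mathbf{x})$, then monotonicity (Property~\ref{prop:monotonicity}) combined with the normalization $u_i^{v_i}(\mathbf{x},0) = v_i(\mathbf{x})$ (Property~\ref{prop:second}) gives $u_i^{v_i}(\mathbf{x}, p_i) \leq v_i(\mathbf{x})$, and combining with $p_i \leq v_i(\mathbf{x})$ produces the bound $2 v_i(\mathbf{x})$. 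Summing over agents concludes the argument.

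I do not foresee a genuine obstacle here: every step is directly forced by the four axioms defining normalized risk-averse utilities together with the standing assumption that payments are non-negative. The only subtlety worth flagging is that concavity of the underlying $h_i$-functions is not actually invoked, only its relaxed form from Property~\ref{prop:third}; this matters because the lemma is used in Section~\ref{sec:model} to justify treating risk-averse $\OPT$ and its quasilinear counterpart as the same benchmark up to a constant, and so the bound should hold in the full generality claimed. A final cosmetic point is that the factor $2$ is tight in the worst case (e.g., an agent with $u_i^{v_i}(\mathbf{x}, p_i) \approx v_i(\mathbf{x})$ for $p_i$ slightly below $v_i(\mathbf{x})$), so no refinement of the argument can yield a better constant within this axiomatic framework.
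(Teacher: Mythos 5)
Your proposal is correct and follows essentially the same argument as the paper: the same per-agent case distinction on whether $p_i \leq v_i(\mathbf{x})$ or $p_i > v_i(\mathbf{x})$, using monotonicity with the normalization $u_i^{v_i}(\mathbf{x},0)=v_i(\mathbf{x})$ in the first case and the relaxed-concavity upper bound in the second, then summing over agents. The only cosmetic difference is that you prove the pointwise bound for all $(\mathbf{x},\mathbf{p})$ before maximizing, whereas the paper fixes the maximizer of the risk-averse welfare at the outset.
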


\begin{proof}
Let $\mathbf{x}$, $\mathbf{p}$ denote the outcome and payment profile that maximizes the social welfare $\sum_{i \in N} u_i^{v_i}(\mathbf{x}, p_i) + \sum_{i \in N} p_i$. Consider a fixed player $i$. If $0 \leq p_i \leq v_i(\mathbf{x})$, then by monotonicity of $u_i^{v_i}(\mathbf{x}, \cdot)$ and Assumption~\ref{prop:second}, $u_i^{v_i}(\mathbf{x}, p_i) + p_i \leq u_i^{v_i}(\mathbf{x}, 0) + p_i \leq 2 v_i(\mathbf{x})$. If $p_i > v_i(\mathbf{x})$, then we know from Assumption~\ref{prop:third} that $u_i^{v_i}(\mathbf{x}, p_i) + p_i \leq v_i(\mathbf{x})$. So, always, $u_i^{v_i}(\mathbf{x}, p_i) + p_i \leq 2 v_i(\mathbf{x})$.

By taking the sum over all players, we get $\OPT = \sum_{i \in N} u_i^{v_i}(\mathbf{x}, p_i) + \sum_{i \in N} p_i \leq \sum_{i \in N} 2 v_i(\mathbf{x}) \leq 2 \widehat\OPT$.
\end{proof}

As a consequence, the optimal social welfare changes only within a factor of 2 by risk aversion and we may as well take $\widehat\OPT$ as our point of comparison. A VCG mechanism, for example, is still incentive compatible under risk-averse utilities but optimizes the wrong objective function. Lemma~\ref{lemma:OPT_relation} shows that it is still a constant-factor approximation to optimal social welfare. However, in simple mechanisms, the agents' strategic behavior may or may not change drastically under risk aversion, depending on the mechanism. This way, equilbria and outcomes can possibly be very different.

\section{Smoothness Beyond Quasilinear Utilities}\label{sec:general_smoothness}

Most of our positive results rely on the \emph{smoothness} framework. It was introduced by \cite{Roughgarden09} for general games. There are multiple adaptations to the quasilinear mechanism-design setting. We will use the one by \cite{SyrgkanisT13}. As our utility functions will not be quasilinear, in this section we first \bk{observe that the framework can be extended to general utility functions}. Note that throughout this section, the exact definition of $\OPT(\boldsymbol\theta)$ is irrelevant. Therefore, it can be set to the optimal social welfare but also to weaker benchmarks depending on the setting.

\begin{definition}[Smooth Mechanism]\label{def:smoothness}
A mechanism $M$ is $(\lambda, \mu)$-smooth with respect to utility functions $(u_i^{\theta_i})_{\theta_i\in\Theta_i, i\in N}$ for $\lambda, \mu \ge 0$, if for any type profile $\boldsymbol\theta \in \times_i \Theta_i$ and for any action profile $\mathbf{a}$ there exists a randomized action $a^*_i(\boldsymbol\theta, a_{i})$ for each player $i$, such that
$\sum_i u_i^{\theta_i}(a^*_i(\boldsymbol\theta, a_{i}), \mathbf{a}_{-i}) \ge \lambda \OPT(\boldsymbol\theta) - \mu \sum_i p_i(\mathbf{a})$. We denote by $u_i^{\theta_i}(\mathbf{a})$ the expected utility
of a player if $\mathbf{a}$ is a vector of randomized strategies.
\end{definition}

Mechanism smoothness implies bounds on the price of anarchy. The following theorem and its proof are analogous to the theorems in~\cite{SyrgkanisT13}, the proof is therefore deferred to the Appendix, Section~\ref{sec:missing_proofs}. In cases where the deviation required by smoothness does not depend on $a_i$, the results extend to coarse correlated equilibria. The important point is that the respective bounds mostly do not depend on the assumption of quasilinearity. 

\begin{theorem}\label{thm:smoothness_CCE_and_BNE}
If a mechanism $M$ is $(\lambda, \mu)$-smooth w.r.t.~utility functions $(u_i^{\theta_i})_{\theta_i\in\Theta_i, i\in N}$, then any Correlated Equilibrium in the full information setting and any Bayes-Nash Equilibrium in the Bayesian setting achieves efficiency of at least $\frac{\lambda}{\max\{1, \mu\}}$ of $\OPT(\boldsymbol\theta)$ or of $\mathbb{E}_{\boldsymbol\theta}[\OPT(\boldsymbol\theta)]$, respectively.
\end{theorem}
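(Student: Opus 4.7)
The plan is to follow the standard smoothness-to-price-of-anarchy proof of Syrgkanis and Tardos, identifying the only place where quasilinearity is used in the original argument and replacing it by a consequence of the opt-out assumption from Section~\ref{sec:preliminaries}. Specifically, in the classical proof the only non-trivial use of quasilinearity is the pair
\[
\sum_i u_i^{\theta_i}(\mathbf{a}) = \SW^{\boldsymbol\theta}(\mathbf{a}) - \sum_i p_i(\mathbf{a}) \qquad \text{and} \qquad \sum_i p_i(\mathbf{a}) \le \SW^{\boldsymbol\theta}(\mathbf{a}).
\]
In our setting the equality holds by the very definition $\SW^{\boldsymbol\theta}(\mathbf{a}) := \sum_i u_i^{\theta_i}(\mathbf{a}) + \sum_i p_i(\mathbf{a})$, while the inequality follows in any equilibrium from $\mathbb{E}[u_i^{\theta_i}(\mathbf{a})] \ge 0$ (a player can always opt out and obtain non-negative expected utility).

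For the correlated-equilibrium case I would fix $\boldsymbol\theta$ and a CE distribution $D$. For each player $i$ and each $a_i$ in the support of the marginal of $D_i$, I would substitute the smoothness deviation $a_i^*(\boldsymbol\theta, a_i)$ into the CE inequality, multiply by $\Pr_D[a_i]$, and sum over $a_i$ and $i$. The left-hand side becomes $\sum_i \mathbb{E}_{\mathbf{a}\sim D}[u_i^{\theta_i}(\mathbf{a})]$, and the right-hand side is $\mathbb{E}_{\mathbf{a}\sim D}\bigl[\sum_i u_i^{\theta_i}(a_i^*(\boldsymbol\theta, a_i), \mathbf{a}_{-i})\bigr]$, which by $(\lambda,\mu)$-smoothness applied realization-by-realization is at least $\lambda\,\OPT(\boldsymbol\theta) - \mu\,\mathbb{E}\bigl[\sum_i p_i(\mathbf{a})\bigr]$. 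Writing $W := \mathbb{E}[\SW^{\boldsymbol\theta}(\mathbf{a})]$ and $R := \mathbb{E}[\sum_i p_i(\mathbf{a})]$, this reads $W - R \ge \lambda\,\OPT(\boldsymbol\theta) - \mu R$, i.e.\ $W \ge \lambda\,\OPT(\boldsymbol\theta) + (1-\mu) R$. Combining with $R \le W$ from the opt-out step and splitting on whether $\mu \le 1$ or $\mu > 1$ gives $\max\{1,\mu\}\cdot W \ge \lambda\,\OPT(\boldsymbol\theta)$, which is the claimed bound.

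For the Bayes-Nash case I would add the standard coupling/symmetrization trick. Each player $i$ of type $\theta_i$ considers the randomized deviation obtained by drawing a fresh $\boldsymbol\theta'_{-i} \sim F_{-i}$ (admissible because the prior is common knowledge), forming $\boldsymbol\theta' := (\theta_i, \boldsymbol\theta'_{-i})$, and playing $a_i^*(\boldsymbol\theta', s_i(\theta_i))$. Writing the BNE inequality, taking expectation over $\theta_i$, and then relabelling the i.i.d.\ samples $\boldsymbol\theta_{-i}$ and $\boldsymbol\theta'_{-i}$ transforms the resulting term into $\mathbb{E}_{\boldsymbol\theta, \boldsymbol\theta'_{-i}}\bigl[u_i^{\theta_i}(a_i^*(\boldsymbol\theta, s_i(\theta_i)), \mathbf{s}_{-i}(\boldsymbol\theta'_{-i}))\bigr]$. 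Summing over $i$ and invoking $(\lambda,\mu)$-smoothness pointwise at the true profile $\boldsymbol\theta$ with action profile $\mathbf{s}(\boldsymbol\theta')$ (noting that the built-in constraint $\theta'_i = \theta_i$ makes $s_i(\theta'_i) = s_i(\theta_i)$, so the $a_i$-argument in $a_i^*$ matches) produces the Bayesian analogue of the CE inequality, after which exactly the same algebra yields $\max\{1,\mu\}\cdot \mathbb{E}_{\boldsymbol\theta}[\SW^{\boldsymbol\theta}(\mathbf{s}(\boldsymbol\theta))] \ge \lambda\,\mathbb{E}_{\boldsymbol\theta}[\OPT(\boldsymbol\theta)]$.

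The main obstacle is the BNE coupling step: because the smoothness deviation formally depends on the full type profile, player $i$ cannot compute it directly and must use the proxy $(\theta_i, \boldsymbol\theta'_{-i})$ assembled from her own type and a fresh sample from the prior. The subtle point is to check that after the label swap $\boldsymbol\theta_{-i} \leftrightarrow \boldsymbol\theta'_{-i}$ the expression really does match the form that smoothness delivers at the true $\boldsymbol\theta$, and this is exactly where the choice $\theta'_i := \theta_i$ is used. Everything else is a straightforward repackaging of the Syrgkanis--Tardos proof, because (as noted above) no other step uses quasilinearity.
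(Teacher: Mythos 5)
Your correlated-equilibrium argument is correct and is essentially the paper's own proof: apply the CE condition to the deviation $a_i^*(\boldsymbol\theta,a_i)$, take expectations, sum over players, add payments, and split on $\mu\le 1$ versus $\mu>1$, using the opt-out assumption $\mathbb{E}[\sum_i u_i^{\theta_i}(\mathbf{a})]\ge 0$ in the second case. The gap is in the Bayesian coupling step, exactly the point you flag as subtle. Your deviation is $a_i^*((\theta_i,\boldsymbol\theta'_{-i}),s_i(\theta_i))$, i.e.\ the action argument is player $i$'s \emph{own equilibrium action}, and you then swap $\boldsymbol\theta_{-i}\leftrightarrow\boldsymbol\theta'_{-i}$. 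After the swap the summand for player $i$ is $u_i^{\theta_i}\bigl(a_i^*(\boldsymbol\theta,s_i(\theta_i)),\,s_{-i}(\boldsymbol\theta'_{-i})\bigr)$: the argument inside $a_i^*$ comes from the profile $\mathbf{s}(\boldsymbol\theta)$, while the opponents' actions come from $\mathbf{s}(\boldsymbol\theta')$. Definition~\ref{def:smoothness} only bounds $\sum_i u_i^{\theta_i}(a_i^*(\boldsymbol\theta,a_i),\mathbf{a}_{-i})$ when the $a_i$ fed into $a_i^*$ and the $\mathbf{a}_{-i}$ played by the others are coordinates of one and the same action profile $\mathbf{a}$ (with the payments $p_i(\mathbf{a})$ taken at that same profile). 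Your proposed invocation ``at the true profile $\boldsymbol\theta$ with action profile $\mathbf{s}(\boldsymbol\theta')$, where $\theta'_i=\theta_i$'' does not repair this: that $\boldsymbol\theta'$ is a \emph{different} profile for every $i$ (it must equal $\theta_i$ in coordinate $i$ and the fresh sample elsewhere), so there is no single action profile to which the smoothness inequality can be applied, pointwise or in expectation; the per-player mismatch cannot be summed into the form the definition provides.

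The fix is the choice made in the paper (following Syrgkanis--Tardos): the deviating player samples a \emph{full} fresh profile $\boldsymbol\zeta$, including her own coordinate $\zeta_i$, and plays $a_i^*((\theta_i,\boldsymbol\zeta_{-i}),s_i(\zeta_i))$ --- note $s_i(\zeta_i)$, not $s_i(\theta_i)$. One then swaps the own coordinates $\theta_i\leftrightarrow\zeta_i$ (not the $-i$ blocks), after which the summand reads $u_i^{\zeta_i}\bigl(a_i^*(\boldsymbol\zeta,s_i(\theta_i)),\,s_{-i}(\boldsymbol\theta_{-i})\bigr)$: now the type profile $\boldsymbol\zeta$ is common to all players, and both the $a_i^*$-argument and the opponents' actions come from the single profile $\mathbf{s}(\boldsymbol\theta)$. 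Hence smoothness applies pointwise and yields $\lambda\OPT(\boldsymbol\zeta)-\mu\sum_i p_i(\mathbf{s}(\boldsymbol\theta))$, and $\mathbb{E}[\OPT(\boldsymbol\zeta)]=\mathbb{E}_{\boldsymbol\theta}[\OPT(\boldsymbol\theta)]$. From there your algebra (adding payments, opt-out non-negativity, case split on $\mu$) goes through unchanged.
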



In the standard single-item setting, one item is auctioned among $n$ players, with their valuations and actions (bids) both being real numbers. In the common auction formats, the item is given to the bidder with the highest bid.

In a \emph{first-price auction}, the winner has to pay her bid; the other players do not pay anything. It is $(1 - 1/e, 1)$-smooth w.r.t.\ quasilinear utility functions. In an \emph{all-pay auction}, all players have to pay their bid. It is $(1/2, 1)$-smooth w.r.t.\ quasilinear utility functions.
These smoothness results were given by \cite{SyrgkanisT13}. They also show that simultaneous and sequential compositions of smooth mechanisms are again smooth.

What is remarkable here is that first-price and all-pay auctions achieve nearly the same welfare guarantees. We will show that in the risk-averse setting this is not true. While the first-price auction almost preserves its constant price of anarchy, the all-pay auction has an unbounded price of anarchy, even with only three players.

\section{Quasilinear Smoothness Often Implies Risk-Averse Smoothness (Main Result~\ref{main-result-smooth})}
\label{sec:positive-results}
Our main positive result is that many price-of-anarchy guarantees that are proved via smoothness in the quasilinear setting transfer to the risk-averse one. First, we consider mechanisms that are $(\lambda, \mu)$-smooth with respect to quasilinear utility functions. We show that if the deviation strategy $\mathbf{a^\ast}$ that is used to establish smoothness ensures non-negative utility, then the price-of-anarchy bound extends to risk-averse settings at a multiplicative constant loss.

%
\begin{theorem}\label{thm:main_result1}
If a mechanism is $(\lambda, \mu)$-smooth w.r.t.~quasilinear utility functions $(\hat{u}_i^{v_i})_{i\in N, v_i\in\mathcal{V}_i}$ and the actions in the support of the smoothness deviations satisfy $\hat{u}_i(a_i^*, \mathbf{a}_{-i})\geq0, \forall \mathbf{a}_{-i}, \forall i$, then any Correlated Equilibrium in the full information setting and any Bayes-Nash Equilibrium in the Bayesian setting achieves efficiency at least $\frac{\lambda}{2\cdot\max\{1, \mu\}}$ of the expected optimal even in the presence of risk averse bidders.
\end{theorem}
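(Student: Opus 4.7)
The plan is to re-use the same deviation actions $a^\ast_i(\boldsymbol v, a_i)$ that witness smoothness in the quasilinear setting, and show that they also witness smoothness (with slightly degraded constants) in the risk-averse setting. The final price-of-anarchy bound will then follow from Theorem~\ref{thm:smoothness_CCE_and_BNE}.

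First I would compare the risk-averse and quasilinear utilities on the deviation action. By hypothesis, for every realization of $a^\ast_i$ and every $\mathbf{a}_{-i}$, we have $\hat u_i^{v_i}(a^\ast_i, \mathbf{a}_{-i}) = v_i(X(a^\ast_i, \mathbf{a}_{-i})) - p_i(a^\ast_i, \mathbf{a}_{-i}) \ge 0$. This means the realized payment lies in the interval $[0, v_i(\mathbf{x})]$, so Property~\ref{prop:third} of normalized risk-averse utilities applies in the ``$\geq$'' direction and gives $u_i^{v_i}(a^\ast_i, \mathbf{a}_{-i}) \ge \hat u_i^{v_i}(a^\ast_i, \mathbf{a}_{-i})$ pointwise. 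Taking expectations over the randomness in $a^\ast_i$ (and, in the correlated/Bayesian case, over $\mathbf{a}_{-i}$), this inequality is preserved.

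Next I would substitute into the quasilinear smoothness condition. Summing over $i$,
\[
\sum_i u_i^{v_i}(a^\ast_i(\boldsymbol v, a_i), \mathbf{a}_{-i}) \;\ge\; \sum_i \hat u_i^{v_i}(a^\ast_i(\boldsymbol v, a_i), \mathbf{a}_{-i}) \;\ge\; \lambda\, \widehat\OPT(\boldsymbol v) - \mu \sum_i p_i(\mathbf{a}),
\]
where $\widehat\OPT$ denotes the optimal social welfare in the quasilinear sense. Now I invoke Lemma~\ref{lemma:OPT_relation}, which gives $\widehat\OPT(\boldsymbol v) \ge \tfrac{1}{2}\OPT(\boldsymbol v)$ for the optimal risk-averse social welfare $\OPT(\boldsymbol v)$. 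Combining, we obtain
\[
\sum_i u_i^{v_i}(a^\ast_i(\boldsymbol v, a_i), \mathbf{a}_{-i}) \;\ge\; \tfrac{\lambda}{2}\, \OPT(\boldsymbol v) - \mu \sum_i p_i(\mathbf{a}),
\]
i.e.\ the mechanism is $(\lambda/2, \mu)$-smooth with respect to the risk-averse utilities and with $\OPT$ as the benchmark.

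Finally I apply Theorem~\ref{thm:smoothness_CCE_and_BNE}, which is stated for general utility functions: from $(\lambda/2, \mu)$-smoothness it concludes that every Correlated Equilibrium (resp.\ Bayes-Nash Equilibrium) has expected social welfare at least $\frac{\lambda}{2\max\{1,\mu\}}$ times the expected optimum, exactly matching the bound in the theorem. The only nontrivial step is the second paragraph: one must check that Property~\ref{prop:third} is being applied in the correct direction, which is precisely why the hypothesis on non-negative quasilinear utility of the deviation is needed — without it, the deviation payment could exceed the value, the inequality $u_i^{v_i}\ge \hat u_i^{v_i}$ would flip, and the argument would break down (as illustrated by the all-pay auction in Main Result~\ref{main-result-allpay}).
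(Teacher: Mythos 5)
Your proposal is correct and matches the paper's own argument: the paper likewise shows (Lemma~\ref{lemma:smooth}) that the quasilinear smoothness deviation with non-negative quasilinear utility yields $u_i^{v_i}\ge\hat u_i^{v_i}$ via Property~\ref{prop:third}, combines this with Lemma~\ref{lemma:OPT_relation} to get $(\lambda/2,\mu)$-smoothness w.r.t.\ the risk-averse utilities, and then invokes Theorem~\ref{thm:smoothness_CCE_and_BNE}. No gaps to report.
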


Using Theorem~\ref{thm:smoothness_CCE_and_BNE}, it suffices to prove the following lemma.

\begin{lemma}\label{lemma:smooth}
If a mechanism is $(\lambda, \mu)$-smooth w.r.t.~quasilinear utility functions $(\hat{u}_i^{v_i})_{i\in N, v_i\in\mathcal{V}_i}$ and the actions in the support of the smoothness deviations satisfy 
\begin{equation}
\label{eq:nonnegutility}
\hat{u}_i(a_i^*, \mathbf{a}_{-i})\geq0, \forall \mathbf{a}_{-i}, \forall i \enspace,
\end{equation}
then the mechanism is $(\lambda/2, \mu)$-smooth with respect to any normalized risk-averse utility functions $(u_i^{v_i})_{i\in N, v_i\in\mathcal{V}_i}$.
\end{lemma}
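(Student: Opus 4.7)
The plan is to use the same deviation strategy $a_i^*(\boldsymbol{v}, a_i)$ that witnesses quasilinear smoothness, and to bound the risk-averse utilities of this deviation below by the quasilinear ones. First, I would fix an arbitrary type (valuation) profile $\boldsymbol{v}$ and an action profile $\mathbf{a}$, together with the randomized deviations $a_i^*(\boldsymbol{v}, a_i)$ guaranteed by the hypothesis. Because payments produced by any mechanism are non-negative by definition, and because assumption \eqref{eq:nonnegutility} gives $v_i(X(a_i^*, \mathbf{a}_{-i})) - p_i(a_i^*, \mathbf{a}_{-i}) \geq 0$ pointwise on the support, we have $0 \leq p_i \leq v_i(\mathbf{x})$ on every realization. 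Property~\ref{prop:third} (relaxed concavity) then applies pointwise and gives $u_i^{v_i}(a_i^*, \mathbf{a}_{-i}) \geq \hat{u}_i^{v_i}(a_i^*, \mathbf{a}_{-i})$ on each realization; taking expectations preserves the inequality.

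Next, summing over all players and using the quasilinear smoothness assumption, I get
\[
\sum_i u_i^{v_i}(a_i^*(\boldsymbol{v}, a_i), \mathbf{a}_{-i}) \;\geq\; \sum_i \hat{u}_i^{v_i}(a_i^*(\boldsymbol{v}, a_i), \mathbf{a}_{-i}) \;\geq\; \lambda\, \widehat{\OPT}(\boldsymbol{v}) - \mu \sum_i p_i(\mathbf{a}).
\]
To convert the benchmark from $\widehat{\OPT}$ to the risk-averse $\OPT$, I invoke Lemma~\ref{lemma:OPT_relation}, which yields $\widehat{\OPT}(\boldsymbol{v}) \geq \tfrac{1}{2} \OPT(\boldsymbol{v})$. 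Substituting this in gives
\[
\sum_i u_i^{v_i}(a_i^*(\boldsymbol{v}, a_i), \mathbf{a}_{-i}) \;\geq\; \tfrac{\lambda}{2}\, \OPT(\boldsymbol{v}) - \mu \sum_i p_i(\mathbf{a}),
\]
which is exactly the $(\lambda/2, \mu)$-smoothness condition from Definition~\ref{def:smoothness} w.r.t.\ the normalized risk-averse utilities.

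I do not expect a real obstacle; the proof is essentially a two-line reduction. The only subtlety worth being careful about is that Property~\ref{prop:third} only provides the lower bound $u_i^{v_i} \geq v_i - p_i$ in the regime $0 \leq p_i \leq v_i(\mathbf{x})$, so I must first confirm that hypothesis \eqref{eq:nonnegutility} — together with the mechanism's non-negativity of payments — places every realization of the deviation in this regime. After this check, the randomization of $a_i^*$ causes no trouble, because the inequality holds pointwise and therefore in expectation.
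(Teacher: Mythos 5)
Your proof is correct and follows essentially the same route as the paper: reuse the quasilinear smoothness deviation, apply Property~\ref{prop:third} (with \eqref{eq:nonnegutility} and non-negative payments ensuring $0\le p_i\le v_i(\mathbf{x})$) to get $u_i^{v_i}\ge\hat{u}_i^{v_i}$ for the deviation, sum, and convert $\widehat\OPT$ to $\OPT$ via Lemma~\ref{lemma:OPT_relation}. The only difference is that you spell out the pointwise-then-expectation step for the randomized deviation, which the paper leaves implicit.
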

\begin{proof}
We start from an arbitrary action profile $\mathbf{a}$ and want to satisfy Definition~\ref{def:smoothness}.
Since there exist smoothness deviations s.t. $\hat{u}_i(a_i^*, \mathbf{a}_{-i})=v_i(\mathbf{x}(a_i^*, \mathbf{a}_{-i}))-p_i\geq0, \forall \mathbf{a}_{-i}, \forall i$, we know from property~\ref{prop:third} of the risk aversion definition that $u_i^{v_i}(a_i^*, \mathbf{a}_{-i})\ge \hat{u}_i^{v_i}(a_i^*, \mathbf{a}_{-i})$. Therefore,
\[
\sum_i u_i^{v_i}(a_i^*, \mathbf{a}_{-i}) \ge \sum_i \hat{u}_i^{v_i}(a_i^*, \mathbf{a}_{-i})\ge \lambda \widehat\OPT - \mu \sum_i p_i(\mathbf{a})\ge\frac{\lambda}{2} \OPT - \mu \sum_i p_i(\mathbf{a})\enspace,
\]
where the last inequality follows from Lemma~\ref{lemma:OPT_relation}.
\end{proof}

Note that in order for~\eqref{eq:nonnegutility} to hold, it is sufficient if all undominated strategies guarantee non-negative quasilinear utility. For example, in a first-price auction, the only undominated bids are the ones from $0$ to $v_i$. Regardless of the other players' bids, these can never result in negative utilities.

\begin{corollary}
Under normalized risk-averse utilities, the first-price auction has a constant price of anarchy for correlated and Bayes-Nash equilibria.
\end{corollary}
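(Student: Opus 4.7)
The plan is to reduce the statement to Theorem~\ref{thm:main_result1} by producing a quasilinear $(\lambda,\mu)$-smoothness proof for the first-price auction whose deviation $a_i^*(\boldsymbol\theta, a_i)$ always has non-negative quasilinear utility against any opponent profile $\mathbf{a}_{-i}$, i.e.~satisfies condition~\eqref{eq:nonnegutility}. Given such a proof, Theorem~\ref{thm:main_result1} immediately yields a price of anarchy of at most $2\max\{1,\mu\}/\lambda$ for both correlated and Bayes-Nash equilibria under normalized risk-averse utilities.

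For the smoothness step, I would invoke the $(1-1/e, 1)$-smoothness proof for the single-item first-price auction from \cite{SyrgkanisT13}. Recall that the standard deviation takes the agent $i^*$ who receives the item in the optimal allocation and has them bid $r\cdot v_{i^*}$, where $r$ is drawn from a distribution supported on $[0,1)$ (the usual choice is the density $f(r)=1/(1-r)^2$ on $[0, 1-1/e]$, or a similar construction); every other agent bids $0$, which is a degenerate deviation with trivially non-negative utility. What matters for us is only that every bid in the support of $a_i^*$ lies in $[0, v_i(\mathbf{x}^*)]$.

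The key observation (really the only content beyond citation) is that any bid $b \in [0, v_i]$ in a first-price auction guarantees non-negative quasilinear utility regardless of $\mathbf{a}_{-i}$: the agent either loses and pays nothing, or wins and pays $b \leq v_i$, obtaining utility $v_i - b \geq 0$. Hence condition~\eqref{eq:nonnegutility} holds, and Lemma~\ref{lemma:smooth} upgrades the $(1-1/e, 1)$-smoothness in the quasilinear model to $((1-1/e)/2, 1)$-smoothness in the risk-averse model, yielding a constant price of anarchy of $2e/(e-1)$.

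I do not expect a real obstacle. The only subtlety is that one must pick a smoothness proof whose deviations stay within $[0, v_i]$; the textbook proof does. Everything else is a direct application of Theorem~\ref{thm:main_result1}, so the corollary is essentially a sanity check that the machinery of Section~\ref{sec:positive-results} applies to the most basic example.
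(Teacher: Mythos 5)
Your proposal is correct and follows essentially the same route as the paper: the paper likewise notes (in the remark preceding the corollary) that in a first-price auction all bids in $[0, v_i]$ — in particular the bids in the support of the standard $(1-1/e,1)$-smoothness deviation of \cite{SyrgkanisT13} — can never yield negative quasilinear utility, so condition~\eqref{eq:nonnegutility} holds and Theorem~\ref{thm:main_result1} applies. The only difference is that you spell out the deviation distribution and the resulting constant $2e/(e-1)$ explicitly, which the paper leaves implicit.
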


\bk{We in addition note here that the first part of Property~\ref{prop:third} of the normalization assumption, $u_i^{v_i}(\mathbf{x}, p_i) \ge v_i(\mathbf{x})-p_i, 0\le p_i\le v_i(\mathbf{x})$, is not crucial for obtaining a result similar to Theorem~\ref{thm:main_result1}. Indeed, a relaxation of the form $u_i^{v_i}(\mathbf{x}, p_i) \ge C\cdot(v_i(\mathbf{x})-p_i), 0\le p_i\le v_i(\mathbf{x})$ for $0<C<1$, $C$ constant, would incur a loss of at most a factor of $C$ in the efficiency bound of Theorem~\ref{thm:main_result1}. More details can be found in the Appendix, Section~\ref{sec:normalization_relaxation}.}

For second-price auctions and their generalizations, for example, the just stated theorems do not suffice to prove guarantees on the quality of equilibria. One in addition needs a no-overbidding assumption and this is further taken care of in the framework of \emph{weak smoothness}, also introduced in~\cite{SyrgkanisT13}. 
We defer all definitions and results that deal with weak smoothness, including the extension from quasilinear to general utility functions and risk-averse utilities yielding a constant loss as compared to the quasilinear case, to the Appendix, Section~\ref{sec:weak_smoothness}.

We also consider the setting where players have hard \emph{budget constraints}. Note that in this case the players' preferences are not quasilinear already in the risk neutral case. Informally, we show that if a mechanism is $(\lambda,\mu)$-smooth w.r.t. quasilinear utility functions, then the loss of performance in the budgeted setting is bounded by a constant, even in the presence of risk-averse agents. All details can be found in Section~\ref{sec:budgets} of the Appendix.


\section{Unbounded Price of Anarchy for All-Pay Auctions (Main Result~\ref{main-result-allpay})}\label{sec:negative-results}
From the previous section, we infer that the constant price-of-anarchy bounds for first-price and second-price auctions immediately extend to the risk-averse setting. This is not true for all-pay auctions; by definition there is no non-trivial bid that always ensures non-negative utility. Indeed, as we show in this section, the price of anarchy is unbounded in the presence of risk-averse players. Missing calculation details can be found in the Appendix, Section~\ref{subsec:calculations}.

\begin{theorem}
\label{theorem:all-pay}
In an all-pay auction with risk-averse players, the PoA is unbounded.
\end{theorem}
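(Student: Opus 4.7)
My plan is an explicit three-agent Bayes--Nash construction that exhibits the failure of the all-pay auction in the risk-averse setting. Following the intuition given in the introduction, I would take agents $1$ and $2$ to have iid values drawn from a distribution $F$ on $\mathbb{R}_{\ge 0}$, playing the symmetric BNE $B(v) = \int_0^v y\, f(y)\, dy$ of the two-agent all-pay auction; I would give agent $3$ a deterministic high value $v_3$ together with a sharply concave normalized risk-averse utility $h_3$. A convenient concrete choice for $h_3$ is the piecewise-linear function with slope $s_0 \gg 1$ on $(-\infty,0]$ and slope $1$ on $[0,v_3]$, which is easily checked to satisfy the normalization assumptions of Section~\ref{sec:model}. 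The proposed equilibrium is: agents $1$ and $2$ play $B(\cdot)$, and agent $3$ bids $0$.

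The Bayes--Nash verification splits into two pieces. Given agent $3$ bids $0$, the two-agent subgame among agents $1$ and $2$ is undisturbed, so $B(\cdot)$ is automatically a best response for them. For agent $3$, the expected utility of bidding $b>0$ is
\[
F_{\max}(b)\, h_3(v_3-b) \,+\, (1-F_{\max}(b))\, h_3(-b),
\]
where $F_{\max}$ is the CDF of $\max(B(V_1),B(V_2))$. I would choose $F$ so that $F_{\max}(b)<1$ for every $b\ge 0$, i.e.\ so that agent $3$ faces a positive loss probability at every bid. Then taking $s_0$ large (as a function of $v_3$ and $F$) forces the negative term $-(1-F_{\max}(b))\, s_0 b$ to outweigh the positive winning contribution at every $b>0$, making bidding $0$ the best response for agent $3$. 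This is precisely the qualitative phenomenon highlighted in the introduction: sharp risk aversion makes the threat of an uncompensated all-pay loss unacceptable at every positive bid.

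For the price-of-anarchy bound, agent $3$ wins with probability $0$ and contributes $0$ to $\SW$, so $\mathbb{E}[\SW]$ is carried entirely by the two-agent subgame and is bounded above by a constant times $\mathbb{E}[\max(V_1,V_2)]$, whereas $\mathbb{E}[\widehat\OPT]\ge v_3$. Letting $v_3\to\infty$ while simultaneously scaling $s_0$ and $F$ then makes the ratio unbounded. The delicate step I expect to wrestle with is engineering $F$ so that the bid distribution has enough upper tail for $F_{\max}(b)<1$ to hold at values $b$ comparable to $v_3$, while $\mathbb{E}[\max(V_1,V_2)]$ stays bounded so that $\mathbb{E}[\SW]$ does not also grow. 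The purely quasilinear identity $B(v)\le \mathbb{E}[V]$ ties these two quantities together, so I anticipate either letting agents $1$ and $2$ themselves be mildly risk-averse (which modifies the symmetric BNE characterization) or placing a rare high-type atom in $F$ that pushes the bid support upward without inflating $\mathbb{E}[\max(V_1,V_2)]$; getting this tradeoff to parametrically give unbounded PoA while preserving the BNE property on agents $1$ and $2$ is the main obstacle.
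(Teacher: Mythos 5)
Your high-level architecture coincides with the paper's (two low-value bidders in a symmetric BNE, a high-value very risk-averse third bidder who best-responds by opting out), but there is a genuine gap at exactly the point you flag as ``the main obstacle,'' and your first-choice instantiation provably cannot close it. If agents $1$ and $2$ are quasilinear and play the symmetric BNE $B(v)=\int_0^v y f(y)\,dy$, then every bid is at most $\mathbb{E}[V]$ (this is precisely the paper's remark and Appendix~\ref{sec:quasilinear_allpay}). For the PoA to blow up you need $v_3 \gg \mathbb{E}[\max(V_1,V_2)] \ge \mathbb{E}[V]$, but then agent $3$ can bid $\mathbb{E}[V]+\delta < v_3$ and win with probability one, obtaining \emph{certain} positive utility; no choice of the slope $s_0$ can make bidding $0$ a best response, because risk aversion only bites when there is a loss event. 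The ``rare high-type atom'' variant does not repair this: under the quasilinear BNE the bid at the atom is still $\int_0^M t f(t)\,dt\le \mathbb{E}[V]$, so the top of the bid support stays below $\mathbb{E}[V]$ and the same deterministic overbid by agent $3$ breaks the equilibrium. So the construction you actually write down is not an equilibrium, and the condition $F_{\max}(b)<1$ for all $b\le v_3$ is unachievable along that route.

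The paper resolves this by your second, undeveloped alternative: agents $1$ and $2$ are themselves mildly risk-averse, with $h_i(x)=1-e^{-x}$, and the symmetric BNE is re-derived in closed form, $\beta(x)=\int_{1/2}^{x}\frac{f(t)(e^t-1)}{F(t)+(1-F(t))e^t}\,dt$, via an explicit first-order-condition/monotonicity argument. The exponential loss penalty makes high types bid roughly $\tfrac12\ln(M/2)$ even though $\mathbb{E}[V]$ stays bounded by a constant, so with probability $\epsilon=1/M^2$ the competing bid exceeds $v_3=\tfrac13\ln(M/2)$, giving agent $3$ a loss probability at every feasible bid. Verifying that $0$ is then agent $3$'s best response still needs a two-case analysis (bids above $1/16$ lose with probability $\ge\epsilon$ against a slope $C=16 v_3 M^2$; bids below $1/16$ win with probability $O(b_3')$, using the lower bound $\beta(v_1)>\tfrac14(v_1-\tfrac12)$), and the slope must grow with $M$ --- a single constant $s_0$ cannot work, by the bounded-slope upper bound in Appendix~\ref{sec:bounded_slope}. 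None of this is routine bookkeeping: deriving and verifying the risk-averse BNE and the two tail estimates on $\beta$ is the substance of the paper's proof, and it is exactly the part your proposal leaves open.
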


The general idea is to construct a Bayes-Nash equilibrium with two players that very rarely have high values and only then bid high values. We then add a third player who always has a high value. However, as the first two players bid high values occasionally, there is no possible bid that ensures he will surely win. This means, any bid has a small probability of not getting the item but having to pay. Risk-averse players are more inclined to avoid this kind of lotteries. In particular, making our third player risk-averse enough, he prefers the sure zero utility of not participating to any way of bidding that always comes with a small probability of negative utility.

\begin{proof}[Proof of Theorem~\ref{theorem:all-pay}]
We consider two (mildly) risk-averse players who both have the same valuation distributions and a third (very) risk-averse player with a constant value. For a large number $M>5$, the first two players have values $v_1$ and $v_2$ drawn independently from distributions with density functions of value $2\cdot(1-(M-1)\cdot\epsilon)$ on the interval $[1/2,1)$ and value $\epsilon$ on the interval $[1,M]$, where $\epsilon=1/M^2$. The third player always has value $1/3\cdot \ln (M/2)$ for winning. 

We will construct a symmetric pure Bayes-Nash equilibrium involving only the first two players. It will be designed such that for the third player it is a best response to always bid $0$, i.e., to opt out of the mechanism and never win the item. So, the combination of these strategies will be a pure Bayes-Nash equilibrium for all three players.

Note that the social welfare of any equilibrium of this form is upper-bounded by the optimal social welfare that can be achieved by the first two bidders. By Lemma~\ref{lemma:OPT_relation}, it is bounded by
\[
\mathbb{E}[\SW]\le2\cdot \mathbb{E}[\max\{v_1,v_2\}]\le2\cdot \mathbb{E}[v_1+v_2]=2\cdot (\mathbb{E}[v_1]+\mathbb{E}[v_2])=4\cdot\mathbb{E}[v_1]\le 4 \enspace.
\]
Furthermore, the third player's value $v_3 = 1/3\cdot \ln (M/2)$ is a lower bound to the optimal social welfare \bk{in the construction containing all three players}. So, as pointwise $\OPT(v)\ge 1/3\cdot\ln(M/2)$, where $v=(v_1,v_2,v_3)\in\mathcal{V}$ denotes the valuation profile, this implies that the price of anarchy can be arbitrarily high.

We define the utility functions by setting
\begin{equation}
u^{v_i}_i(b_i) =
\begin{cases}
\frac{h_i(v_i-b_i)}{h(v_i)}\cdot v_i & \text{if $b_i$ is the winning bid}\\
\frac{h_i(-b_i)}{h(v_i)}\cdot v_i & \text{otherwise } 
\end{cases}
\end{equation}
For the first two players, we use $h_i(x):=1-e^{-x}$, $i \in \{1, 2\}$, so in particular increasing and concave. For the third player, we set $h_i(x) = x$ for $x \geq 0$ and $h_i(x) = C \cdot x$ for $x < 0$, where $C = (16\cdot\frac{1}{3}\cdot\ln M/2)\cdot M^2 \geq 1$. Again this function is increasing and concave\footnote{Its slope is not an absolute constant. This is indeed necessary because the price of anarchy can be bounded in terms of the slopes of the $h_i$-functions as we show in Appendix~\ref{sec:bounded_slope}.}. \bk{Note that the utility functions also satisfy normalizations at $p_i=v_i(\mathbf{x})$ and at $p_i=0$.} We see that in this example risk aversion has the effect of heavily penalizing payments without winning the auction.

\begin{claim}
With the third player not participating, it is a symmetric pure Bayes-Nash equilibrium for the first two players to play according to bidding function $\beta\colon\mathcal{V}_i\to \mathbb{R}_+, i\in\{1,2\},$ such that 
\begin{equation}\label{eq:beta}
\beta(x) =\int_{\frac{1}{2}}^x \frac{f(t)(e^t-1)}{F(t)+(1-F(t))e^t} dt\enspace,
\end{equation}
where $F$ denotes the cumulative distribution function of the value and $f$ denotes its density.
\end{claim}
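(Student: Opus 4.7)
The plan is to verify that $\beta$ satisfies the first-order condition for a symmetric equilibrium of the two-player subgame in which the third player abstains, then upgrade this FOC to a global maximum via a single-crossing argument, and finally rule out deviations outside the image of $\beta$.

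First, I would note that the integrand of~\eqref{eq:beta} is continuous and strictly positive on $[1/2,M]$, so $\beta$ is continuous, strictly increasing, and satisfies $\beta(1/2)=0$. Fix player~$1$ with value $x$ and assume player~$2$ uses $\beta$. If player~$1$ bids $b=\beta(z)$, the winning probability equals $F(z)$. Since the normalization factor $x/h(x)$ is a positive constant in $b$, maximizing the expected utility is equivalent to maximizing
\[
V(z,x) \;:=\; F(z)\,h(x-\beta(z)) + (1-F(z))\,h(-\beta(z)), \qquad h(y)=1-e^{-y}.
\]
Differentiating $V$ in $z$, using $h'(y)=e^{-y}$, and evaluating at $z=x$, the common factor $e^{\beta(x)}$ cancels; after multiplying through by $e^x$, the condition $\partial_z V|_{z=x}=0$ reduces to $\beta'(x)=f(x)(e^x-1)/[F(x)+(1-F(x))e^x]$, which is precisely the integrand in~\eqref{eq:beta}. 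Hence $\beta$ satisfies the FOC at every $x\in[1/2,M]$.

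The main obstacle is promoting the FOC to a global maximum, for which I would establish the single-crossing property $\operatorname{sign}\partial_z V(z,x)=\operatorname{sign}(x-z)$. Using the FOC at $z$ to eliminate $\beta'(z)$, one rewrites
\[
\partial_z V(z,x) \;=\; f(z)\,B(x,z)\!\left[\frac{A(x,z)}{B(x,z)} - \frac{A(z,z)}{B(z,z)}\right],
\]
where $A(y,z):=h(y-\beta(z))-h(-\beta(z))=e^{\beta(z)}(1-e^{-y})$ and $B(y,z):=F(z)h'(y-\beta(z))+(1-F(z))h'(-\beta(z))=e^{\beta(z)}[F(z)e^{-y}+1-F(z)]$. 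Since $B(x,z)>0$, it suffices to show that $y\mapsto A(y,z)/B(y,z) = (1-e^{-y})/[F(z)e^{-y}+1-F(z)]$ is strictly increasing in $y$, which follows from a direct differentiation whose numerator simplifies to $e^{-y}>0$. Therefore $V(\cdot,x)$ is strictly maximized at $z=x$ on $[1/2,M]$.

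Finally, any bid $b>\beta(M)$ is strictly dominated by $\beta(M)$ (both win almost surely but $\beta(M)$ incurs a smaller payment), so the relevant deviations consist of bids $\beta(z)$ with $z\in[1/2,M]$ together with opting out. Opting out yields utility $0$; moreover, the single-crossing applied at $z=1/2$, combined with $F(1/2)=0$ and $\beta(1/2)=0$, gives $V(1/2,x)=h(0)=0\le V(x,x)$, so bidding $\beta(x)$ is at least as good as opting out. Combining these steps, $\beta$ constitutes a symmetric pure Bayes-Nash equilibrium between players~$1$ and~$2$.
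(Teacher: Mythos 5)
Your proof is correct, and at its core it is the same argument as the paper's: verify that the expected utility against an opponent playing $\beta$ has a derivative that vanishes exactly at the equilibrium bid and has the sign of $\beta(x)-y$, so the critical point is a global maximum. The differences are organizational. The paper works directly in bid space, writes $g(y)=x e^y F(\beta^{-1}(y))+x(1-e^y)/(1-e^{-x})$ using the explicit $h_1(y)=1-e^{-y}$, and differentiates via the inverse function theorem, obtaining a closed form for $g'(y)$ whose sign is read off immediately. You instead change variables to $b=\beta(z)$ (avoiding the inversion of $\beta$), check the first-order condition at $z=x$, and promote it to global optimality through a single-crossing lemma based on monotonicity of the ratio $A(y,z)/B(y,z)$ in $y$ — a slightly more structural route, though the decisive computation (the numerator simplifying to $e^{-y}$) plays the same role as the paper's explicit simplification of $g'$. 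Your write-up is also more explicit than the paper about deviations outside the range of $\beta$ (bids above $\beta(M)$ are dominated) and about opting out being weakly worse, points the paper leaves implicit; both treatments share the same harmless measure-zero tie-breaking caveat, which the paper relegates to a footnote.
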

\begin{proof}
We will argue that playing according to $\beta$ is always the unique best response if the other player is playing according to $\beta$, too. Due to symmetry reasons, it is enough to argue about the first player.

Let us fix player 1's value $v_1 = x$ and consider the function $g\colon \mathbb{R}_{\geq 0} \to \mathbb{R}$ that is defined by $g(y) = \mathbf{E}[u_1^x(b_1=y, b_2=\beta(v_2), b_3=0)]$. We claim that $g$ is indeed maximized at $y = \beta(x)$. We have\footnote{Note that the first step assumes tie breaking in favor of player 1. This is irrelevant for the future steps as the involved probability distributions are continuous.}
\begin{align*}
g(y)&=\Pr[\beta(v_2) \leq y]\cdot \frac{h_1(x-y)}{h_1(x)}\cdot x + \left(1-\Pr[\beta(v_2) \leq y]\right)\cdot \frac{h_1(-y)}{h_1(x)}\cdot x\\
&=\frac{x}{h_1(x)}\left[
F(\beta^{-1}(y))\Big(h_1(x-y)-h_1(-y)\Big) + h_1(-y)\right] \\
&=x e^y F(\beta^{-1}(y)) + \frac{x ( 1 - e^y )}{1 - e^{-x}}\enspace.
\end{align*}
The first derivative of this function is given by
\begin{align*}
g'(y) & = x e^y F(\beta^{-1}(y)) + x e^y \frac{d}{dy} F(\beta^{-1}(y)) - \frac{x}{1 - e^{-x}} e^y\enspace.
\end{align*}
The inverse function theorem implies $\frac{d}{dy} F(\beta^{-1}(y)) = \frac{f(\beta^{-1}(y))}{\beta'(\beta^{-1}(y))}$.
Furthermore, as $\beta'(t) = \frac{f(t)(e^t-1)}{F(t)+(1-F(t))e^t}$, we get for all $t$ that 
$
\frac{f(t)}{\beta'(t)} = \frac{F(t) + (1 - F(t)) e^t}{e^t - 1}= (1-F(t)) + \frac{1}{e^t - 1}\enspace.
$
This simplifies $g'(y)$ to
\begin{align*}
g'(y) & =xe^y+ \frac{x e^y}{e^{\beta^{-1}(y)} - 1} - \frac{xe^y}{1 - e^{-x}} =\frac{xe^y}{(1-e^{-x})(e^{\beta^{-1}(y)}-1)} \left(1 - e^{-x+\beta^{-1}(y)} \right) \enspace.
\end{align*}
Notice that the factor $\frac{xe^y}{(1-e^{-x})(e^{\beta^{-1}(y)}-1)}$ is always positive. Therefore, we observe that $g'(y) = 0$ if and only if $e^{-x+\beta^{-1}(y)} = 1$, which is equivalent to $y = \beta(x)$. Furthermore, $g'(y) > 0$ for $y < \beta(x)$ and $g'(y) < 0$ for $y > \beta(x)$. This means that $y = \beta(x)$ has to be the (unique) global maximum of $g(y)$.
 \end{proof}

\begin{claim}
If the first two players are bidding according to~(\ref{eq:beta}), then it is a best response for the third player to always bid $0$.
\end{claim}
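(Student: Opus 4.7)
The plan is to show that every positive bid $b_3$ by player 3 yields nonpositive expected utility, so bidding $0$ (which gives utility exactly $0$) is a best response. I will split based on whether $b_3 \geq v_3$. If $b_3 \geq v_3$, both outcomes give nonpositive utility: winning yields $\frac{h_3(v_3 - b_3)}{h_3(v_3)}\, v_3 = C(v_3 - b_3) \leq 0$, and losing yields $-C b_3 < 0$, so $\mathbb{E}[u_3(b_3)] \leq 0$ without any further analysis.

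For the substantive case $0 < b_3 < v_3$, set $\tau := \beta^{-1}(b_3) \in (1/2, M)$, which is well-defined since $\beta$ is strictly increasing. Player 3 wins with probability $F(\tau)^2$ for utility $v_3 - b_3 > 0$, and loses otherwise for utility $-C b_3$. Rearranging $\mathbb{E}[u_3(b_3)] \leq 0$ yields $\beta(\tau)\bigl[F(\tau)^2 + (1 - F(\tau)^2)C\bigr] \geq F(\tau)^2 v_3$, and since $C \geq 1$ and $C = 16 v_3 M^2$, it suffices to prove
\begin{equation}\label{eq:third-player-goal}
\beta(\tau) \geq \frac{F(\tau)^2}{16 M^2 (1 - F(\tau)^2)}.
\end{equation}

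To establish (\ref{eq:third-player-goal}) I would lower-bound $\beta(\tau)$ by rewriting (\ref{eq:beta}) as $\beta'(t) = f(t)(1 - e^{-t}) / [1 - F(t)(1 - e^{-t})]$. The denominator is at most $1$ and $1 - e^{-t} \geq c_0 := 1 - e^{-1/2}$ for $t \geq 1/2$, which gives $\beta(\tau) \geq c_0 F(\tau)$. Plugging this into (\ref{eq:third-player-goal}) and using $1 - F(\tau)^2 \geq 1 - F(\tau)$, the inequality reduces to $1 - F(\tau) \geq 1/(16 c_0 M^2)$. For $\tau \in (1/2, 1)$ this holds because $1 - F(\tau) \geq (M-1)/M^2$, and for $\tau \in [1, M)$ it becomes $M - \tau \geq 1/(16 c_0) \approx 0.16$.

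The main obstacle is therefore to guarantee $\tau \leq M - 1/(16 c_0)$ in the substantive case. Since $b_3 < v_3$ forces $\tau < \tau_0 := \beta^{-1}(v_3)$, by monotonicity of $\beta$ it is enough to show $\beta(M - 1) \geq v_3$ (which implies $\tau_0 \leq M - 1 \leq M - 1/(16 c_0)$). I obtain this by lower-bounding the integral (\ref{eq:beta}) over $t \in [M/2, M-1]$: for $M > 5$, one verifies $(1 - F(t))e^t \geq F(t)$ throughout this range (both endpoints satisfy it, and a short monotonicity argument extends to the interior), so the denominator of (\ref{eq:beta}) is at most $2(1-F(t))e^t$ and the integrand at least $(1 - e^{-t})/(2(M-t)) \geq (1 - e^{-M/2})/(2(M - t))$. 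Integrating over this interval yields $\beta(M-1) \geq \frac{1 - e^{-M/2}}{2} \ln(M/2) > \frac{1}{3}\ln(M/2) = v_3$ for every $M > 5$, closing the argument.
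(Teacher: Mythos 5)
Your proof is correct, and it reorganizes the argument rather than reproducing the paper's two-regime analysis. The paper splits on the magnitude of the bid: for $b_3' > \frac{1}{16}$ it only uses that with probability at least $\epsilon = 1/M^2$ an opponent has value at least $M-1$ and therefore bids $\beta(M-1) > \frac{1}{3}\ln(M/2) \geq b_3'$, so the loss probability is at least $\epsilon$ and the penalty $C = 16 v_3 M^2$ already forces negative expected utility; for $b_3' \leq \frac{1}{16}$ it bounds the win probability by $8 b_3'$ via the pointwise estimate $\beta(v_1) \geq \frac{1}{4}(v_1 - \frac{1}{2})$ on $[\frac{1}{2},1)$. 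You instead parameterize every bid $b_3 \in (0, v_3)$ by $\tau = \beta^{-1}(b_3)$, use the exact win probability $F(\tau)^2$, and replace both of the paper's estimates on $\beta$ by the single uniform bound $\beta(\tau) \geq (1-e^{-1/2})F(\tau)$ (from $\beta'(t) \geq f(t)(1-e^{-t})$, since the rewritten denominator $1 - F(t)(1-e^{-t})$ is at most $1$), which collapses the whole claim to the one condition $1 - F(\tau) \geq 1/(16 c_0 M^2)$. The only ingredient you share with the paper is the tail estimate $\beta(M-1) \geq \frac{1-e^{-M/2}}{2}\ln(M/2) > v_3$, which you use to force $\tau \leq M-1$ while the paper uses it to force a loss probability of at least $\epsilon$; in effect your condition on $1-F(\tau)$ unifies the paper's ``loss probability at least $\epsilon$'' regime and its ``win probability at most $8b_3'$'' regime, avoiding the ad hoc threshold $\frac{1}{16}$ at the cost of needing the exact form of the win probability. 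Two presentational points, neither a gap: the well-definedness of $\tau = \beta^{-1}(b_3)$ for all $b_3 < v_3$ itself rests on $\beta(M-1) \geq v_3$, so that estimate should logically be established before $\tau$ is introduced; and on $[M/2, M-1]$ the inequality $(1-F(t))e^t \geq F(t)$ is cleanest via monotonicity of $(M-t)e^t$ on that interval (its minimum $e^{M/2}/(2M)$ already exceeds $1 \geq F(t)$ for $M>5$), rather than by checking the two endpoints and interpolating.
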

\begin{proof}
We now show that the very risk-averse third player with valuation $1/3\cdot \ln (M/2)$ will indeed bid $0$ because every bid $b_3' > 0$ will cause negative expected utility.

We distinguish two cases. For values of $b_3' > \frac{1}{16}$, we use that with a small probability one of the two remaining players has a valuation of at least $M-1$, which leads to negative utility. For $b_3' \leq \frac{1}{16}$ on the other hand, he loses so often that his expected utility is again negative. 

Let us first assume that the third player bids $b_3'$ with $\frac{1}{16} < b_3' \leq v_3$. In this case, with probability more than $\epsilon$ one of the first two players has value of at least $M-1$. The bid of this player with $v_i \geq M-1$ can be estimated as follows
\begin{align*}
\beta(v_i)&\geq \beta(M-1) \geq \int_{M/2}^{M-1} \frac{f(t)(e^t-1)}{1+(1-F(t))e^t}dt = \int_{M/2}^{M-1}\frac{\epsilon (e^t-1)}{1+\epsilon(M-t)e^t}dt\\ &\ge \frac{1}{2}(1-e^{-\frac{M}{2}}) \int_{M/2}^{M-1} \frac{\epsilon e^t}{\epsilon (M-t)e^t}dt = \frac{1}{2}(1-e^{-\frac{M}{2}}) \ln (M/2) > \frac{1}{3} \ln (M/2)\enspace,
\end{align*}
which means that by bidding $b_3'$ the third player loses with probability of at least $\epsilon=1/M^2$. For the expected utility, this implies
\begin{align*}
\mathbb{E}[u_3(b_3', \mathbf{b}_{-3})] &\leq (1 - \epsilon)(v_3 - b_3') + \epsilon(- C\cdot b_3')
< \frac{1}{3}\ln \frac{M}{2} - \frac{1}{16}\cdot 16\cdot\frac{1}{3}\ln \frac{M}{2} = 0\enspace.
\end{align*}

In the case where the third player bids $b_3'$, $0 < b_3' \le \frac{1}{16}$, we need to be a bit more careful with estimating the winning probability. We first give a lower bound on the bidding function of the first player \bk{for $v_1<1$} 
\begin{align*}\label{eq:bid_estimate}
\beta(v_1)&\ge \int_{1/2}^{v_1} \frac{2(1-\frac{M-1}{M^2})(e^t-1)}{2(t-\frac{1}{2})(1-\frac{M-1}{M^2})+ (1-2(t-\frac{1}{2})(1-\frac{M-1}{M^2}))\cdot e^t} dt>\frac{1}{4} \left( v_1 - \frac{1}{2} \right)\enspace.
\end{align*}
\bk{Since for $v_1\ge1$, $\beta(v_1)>\frac{1}{16}$ with probability $1$,} this implies that with $b_3'$, the third player has a winning probability of at most
\[
\Pr[\beta(v_1) \leq b_3'] \le \Pr\left[\frac{1}{4}\left(v_1-\frac{1}{2}\right) \leq b_3'\right] = \Pr\left[v_1 \leq 4b_3'+\frac{1}{2}\right]<2\cdot 4 b_3'\enspace.
\]
Now, having in mind that $C = (16v_3)\cdot M^2\ge 32\cdot v_3$, the utility can be estimated as follows
\begin{align*}
\mathbb{E}[u_3(b_3', \mathbf{b}_{-3})] &\le\Pr[\beta(v_1) \leq b_3']\cdot(v_3-b_3')-\Pr[\beta(v_1) > b_3']\cdot 32\cdot v_3\cdot b_3' \\
&<8b_3'\left(-v_3-\frac{1}{16}\right)<0\enspace.
\end{align*}
So also in this case, the expected utility is negative.
\end{proof}
Combining the two claims, we have constructed a class of distributions and Bayes-Nash equilibria with unbounded price of anarchy.
\end{proof}

As a final remark, we note that the first two bidders occasionally bid high only due to risk aversion. In a symmetric Bayes-Nash equilibrium of the all-pay auction in the quasilinear setting, all bids are always bounded by the expected value of a player (see Appendix~\ref{sec:quasilinear_allpay}). Therefore, such an equilibrium would not work as a point of departure.

\section{Variance-Aversion Model}\label{sec:other-models}
In this section, we consider a different model that tries to capture the effect that agents prefer certain outcomes to uncertain ones. It is inspired by similar models in game theory and penalizes variance of random variables. Rather than reflecting the aversion in the utility functions, it is modeled by adapting the solution concept. 

In the usual definition of equilibria involving randomization, the utility of a randomized strategy profile is set to be the expectation over the pure strategies. The definition we consider here is modified by subtracting the respective standard deviation. For a player $i$, the utility of a randomized strategy profile $\mathbf{a}$ is given as
$u^{v_i}_i(\mathbf{a}) = \mathbb{E}_{\mathbf{b} \sim \mathbf{a}}[\hat{u}^{v_i}_i(\mathbf{b})] - \gamma\sqrt{\Var[\hat{u}_i^{v_i}(\mathbf{b})]}$, 
so a player's utility for an action profile is his expected quasilinear utility for this profile minus the standard deviation multiplied by a parameter $\gamma$ that determines the degree of variance-averseness, $0\le\gamma\le1$. As already mentioned, $\hat{u}_i(\mathbf{a})$ denotes the quasilinear utility of player $i$ for the action profile $\mathbf{a}$.

Bayes-Nash Equilibria and correlated equilibria can be defined the same way as before, always replacing expectations by the difference of expectation and standard deviation. The formal definition for $s(\mathbf{v})$ being a Bayes-Nash equilibrium in this setting is that $\forall i \in N$, $\forall v_i\in \Theta_i$, $a_i\in \mathcal{A}_i$, 
\begin{multline*}
\mathbb{E}_{\mathbf{v}_{-i}}[\hat{u}_i^{v_i}(s(\mathbf{v})) \mid v_i] - \gamma\sqrt{\Var[\hat{u}_i^{v_i}(s(\mathbf{v})) \mid v_i]}\\
\ge \mathbb{E}_{\mathbf{v}_{-i}}[\hat{u}_i^{v_i}(a_i,s_{-i}(\mathbf{v}_{-i})) \mid v_i] - \gamma\sqrt{\Var[\hat{u}_i^{v_i}(a_i, s_{-i}(\mathbf{v}_{-i})) \mid v_i]}\enspace.
\end{multline*}

Note that we again evaluate social welfare as agents perceive it. That is, for a randomized strategy profile $\mathbf{a}$, we set $\SW^{\mathbf{v}}(\mathbf{a}) = \sum_i u_i^{v_i}(\mathbf{a}) + \sum_i p_i(\mathbf{a})$. 

Our first result shows that first-price and notably also all-pay auctions have a constant price of anarchy in this setting. Note that even though the proof looks a lot like smoothness proofs, it is not possible to phrase it within the smoothness framework, since here we are dealing with a different solution concept.

\begin{theorem}
Bayes-Nash Equilibria and Correlated Equilibria of the first-price and all-pay auction have a constant price of anarchy in this model.
\end{theorem}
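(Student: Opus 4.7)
The plan is to imitate the proof of Theorem~\ref{thm:smoothness_CCE_and_BNE}, treating the standard-deviation penalty of the variance-averse utility as a bounded extra cost. Although the solution concept is modified and the smoothness framework does not apply verbatim (as the paper explicitly notes), the equilibrium condition still compares $\mathbb{E}[\hat u_i]-\gamma\sqrt{\Var[\hat u_i]}$ for the played strategy against any deviation, so dropping the nonnegative standard-deviation term on the left side of this inequality yields a direct lower bound on the equilibrium's expected quasilinear utility.

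Concretely, I would use as $a_i^*$ the same deviation that witnesses quasilinear smoothness in Syrgkanis--Tardos: the optimal winner $i^*$ plays the prescribed (possibly randomized) bid around $v_{i^*}/2$ and all other players opt out, contributing zero to both expectation and variance. Starting from the equilibrium condition
\[
\mathbb{E}[\hat u_i(\mathbf{s})]-\gamma\sqrt{\Var[\hat u_i(\mathbf{s})]} \;\geq\; \mathbb{E}[\hat u_i(a_i^*,\mathbf{s}_{-i})]-\gamma\sqrt{\Var[\hat u_i(a_i^*,\mathbf{s}_{-i})]},
\]
summing over $i$, applying the quasilinear smoothness inequality $\sum_i\mathbb{E}[\hat u_i(a_i^*,\mathbf{s}_{-i})]\geq \lambda\OPT-\mu\sum_i p_i(\mathbf{s})$, and adding $\sum_i p_i(\mathbf{s})$ to both sides to turn the variance-averse utilities into social welfare, I get
\[
\SW(\mathbf{s}) \;\geq\; \lambda\OPT + (1-\mu)\sum_i p_i(\mathbf{s}) - \gamma\sum_i \sqrt{\Var[\hat u_i(a_i^*,\mathbf{s}_{-i})]}.
\]
The standard-deviation sum on the right collapses to the single term for $i^*$, whose quasilinear utility lies in an interval of length at most $v_{i^*}$, so by Popoviciu's inequality $\sqrt{\Var[\hat u_{i^*}(a_{i^*}^*,\mathbf{s}_{-i^*})]}\leq v_{i^*}/2=\OPT/2$. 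With $\mu=1$ this yields $\SW(\mathbf{s})\geq(\lambda-\gamma/2)\OPT$, a positive constant for the first-price auction ($\lambda=1-1/e$) and, whenever $\gamma<1$, for the all-pay auction ($\lambda=1/2$).

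The main obstacle is the boundary case of the all-pay auction at $\gamma=1$, where $\gamma/2=1/2=\lambda$ makes the estimate degenerate. To obtain a uniform positive constant across all $\gamma\in[0,1]$, one would either refine the deviation---e.g., a scaled randomized bid that sacrifices a small multiplicative loss in $\lambda$ for a strictly smaller variance coefficient---or exploit the equilibrium's own individual rationality (comparing to bidding $0$ gives $\sum_i\mathbb{E}[\hat u_i(\mathbf{s})]\geq \gamma\sum_i\sqrt{\Var[\hat u_i(\mathbf{s})]}$) to trade welfare against payments in a way that covers the $\gamma=1$ corner.
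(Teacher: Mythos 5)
Your approach does give a constant for the first-price auction (with $\lambda=1-1/e$ the bound $\lambda-\gamma/2\ge 1-1/e-1/2>0$ for all $\gamma\le 1$, modulo the technicality that the equilibrium condition quantifies over pure deviations and that in the Bayesian setting your global deviation ``the optimal winner bids, everyone else opts out'' depends on the whole valuation profile, so it would need a resampling-style argument adapted to the nonlinear standard-deviation term). But for the all-pay auction there is a genuine gap: at $\gamma=1$, which the model includes and which is exactly the regime the theorem is meant to contrast with Observation~\ref{lemma:observation}, your estimate $\lambda-\gamma/2=0$ gives nothing, and the constant degenerates as $\gamma\to1$. You acknowledge this, but neither suggested fix is carried out, and the first one cannot work as described: in an all-pay auction the spread between the winning and losing quasilinear utility of any deviation bid $b$ is exactly $v_{i^*}$ (namely $v_{i^*}-b$ versus $-b$), so scaling the bid does not shrink the Popoviciu bound $v_{i^*}/2$ at all; that bound is tight precisely when the deviation wins with probability $1/2$, and no deviation chosen independently of the opponents' bid distribution can exclude that situation. (Also, be careful with ``dropping the standard-deviation term on the left'': doing so lower-bounds the expected quasilinear welfare, which is an upper bound on the paper's variance-penalized $\SW$; you should keep that term, which your displayed inequality in fact permits.)

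The paper closes exactly this case with an idea your proposal is missing: a case distinction on the opponents' bid distribution. For each player $j$ and value $v_j$, let $q=\Pr[\max_{i\neq j}b_i\le v_j/4]$. If $q\le 3/4$, the payments alone already give $\mathbb{E}[\SW]\ge(1-q)\,v_j/4\ge v_j/16$. If $q\ge 3/4$, the deviation to bidding $v_j/4$ wins with probability at least $3/4$, so its standard deviation is at most $v_j\sqrt{q(1-q)}\le \sqrt{3}\,v_j/4$, and
\[
q\,v_j-\tfrac{1}{4}v_j-\gamma\,v_j\sqrt{q(1-q)}\;\ge\;\tfrac{2-\sqrt{3}}{4}\,v_j\;\ge\;\tfrac{1}{16}\,v_j
\]
even for $\gamma=1$. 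In other words, the variance of the deviation must be controlled using information about the equilibrium (either the rivals bid high often, in which case revenue saves the welfare, or the deviation wins with probability bounded away from $1/2$), not by a worst-case bound; this per-player, per-value argument also handles the Bayesian setting directly by weighting with the probability that the optimum allocates to player $j$. Your second suggested fix (individual rationality of the equilibrium strategies) concerns the variance of the equilibrium play, not of the deviation, so it does not by itself supply this missing step.
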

\begin{proof}
For simplicity, we will show the claim only for Bayes-Nash equilibria. The proof for correlated equilibria works the same way with minor modifications to the notation.

Assume $\mathbf{b}$ is a Bayes-Nash equilibrium. We claim that
$
\mathbb{E}_{\mathbf{v}}\left[\SW^{\mathbf{v}}(\mathbf{b})\right]\ge \frac{1}{16} \cdot \mathbb{E}_{\mathbf{v}}[\OPT]\enspace,
$
where $\OPT$ denotes the value of social welfare in the allocation that maximizes it, i.e. maximized sum of utility and payments of the agents.

Consider a fixed player $j$ and a fixed valuation $v_j$. Let $q=\Pr[\max_{i\neq j}b_{i} \le \frac{1}{4}\cdot v_j]$ denote the probability that no other player's bid exceeds $\frac{1}{4}\cdot v_j$.

Assume first that $q\le \frac{3}{4}$. Then, because the total social welfare lower bounded by the payments
$
\mathbb{E}_{\mathbf{v}_{-j}\vert v_j}\left[\SW^{\mathbf{v}}(\mathbf{b})\right]
\ge (1-q)\frac{1}{4} v_j\ge\frac{1}{16} v_j\enspace.
$

On the other hand, if $q\ge\frac{3}{4}$, since $\mathbb{E}_{\mathbf{v}_{-j}\vert v_j}\left[\SW^{\mathbf{v}}(\mathbf{b})\right]\ge \mathbb{E}_{\mathbf{v}_{-j}\vert v_j}\left[u_j^{v_j}(\frac{v_j}{4},b_{-j})\right]$,
\begin{multline*}\label{eq:variance_first_and_allpay}
\mathbb{E}_{\mathbf{v}_{-j}\vert v_j}\left[\SW^{\mathbf{v}}(\mathbf{b})\right]
 \ge v_j q - \frac{1}{4} v_j - \gamma v_j\sqrt{q(1-q)} \ge \left(\frac{2- \gamma\sqrt{3}}{4}\right)v_j\ge \frac{1}{16}v_j\enspace,
\end{multline*}
where the first inequality is in fact an equality for the all-pay auction.

From here, by taking the expectation over $v_j$ and by weighing the right hand side by the probability that $\OPT$ takes a particular agent, the theorem follows.
\end{proof}

This is contrasted by a correlated equilibrium with $0$ social welfare in a setting with positive values. Indeed, for the special case of $\lambda=1$, we see that the variance-averse model further differs from the risk-averse model described in previous sections.  

\begin{observation}\label{lemma:observation}
The PoA for CE of second price auctions is unbounded if $\gamma=1$. 
\end{observation}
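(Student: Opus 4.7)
My plan is to build an explicit two-bidder correlated equilibrium in which each player's bid, conditional on its own recommendation, is deterministic (so variance aversion does not affect equilibrium incentives), but whose unconditional randomization drives the variance-averse social welfare to zero.

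Concretely, take two bidders, both with the same value $v>0$, and define the correlated distribution $\mathbf{a}$ that places mass $\tfrac{1}{2}$ on the profile $(b_{high},0)$ and mass $\tfrac{1}{2}$ on the profile $(0,b_{high})$ for some $b_{high}>v$. The allocation function is the standard second-price rule, with ties broken in an arbitrary (say, uniform) way. In this distribution, exactly one bidder wins per realization, always paying $0$, so $\sum_i p_i(\mathbf{a})=0$.

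I would then verify the CE condition. Conditional on a recommendation to bid $b_{high}$, the other bidder deterministically bids $0$, so the recommended action wins at price $0$ with quasilinear utility $v$ and zero variance; any deviation $b'>0$ yields the same outcome, while $b'=0$ only introduces a tie-break lottery with strictly lower variance-averse utility, so following the recommendation is weakly optimal. Conditional on a recommendation to bid $0$, the other bidder deterministically bids $b_{high}$, so following gives deterministic utility $0$; the only deviations that change the outcome raise the bid to at least $b_{high}$ and result in a loss of $v-b_{high}<0$ (again deterministic, so variance aversion plays no role here either). Hence $\mathbf{a}$ is a CE in the variance-averse solution concept for any $\gamma\in[0,1]$.

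Finally, I would evaluate the social welfare under the paper's formula $\SW^{\mathbf{v}}(\mathbf{a})=\sum_i u_i^{v_i}(\mathbf{a})+\sum_i p_i(\mathbf{a})$, where $u_i^{v_i}(\mathbf{a})$ is computed with respect to the full (unconditional) randomness of $\mathbf{a}$. Under $\mathbf{a}$, each bidder's quasilinear utility equals $v$ with probability $\tfrac{1}{2}$ and $0$ with probability $\tfrac{1}{2}$, giving mean $v/2$ and standard deviation $v/2$; with $\gamma=1$ we get $u_i^{v_i}(\mathbf{a})=v/2-v/2=0$ for both players. Combined with $\sum_i p_i(\mathbf{a})=0$, this yields $\SW^{\mathbf{v}}(\mathbf{a})=0$, whereas $\OPT\ge v>0$ (e.g., the deterministic allocation giving the item to either bidder at price $0$ has welfare $v$), so the PoA ratio is unbounded. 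The only conceptual subtlety, which is also the heart of the observation, is the mismatch between the \emph{conditional} variance seen by a deviator (zero, by construction) and the \emph{unconditional} variance that appears in the social welfare; this is what allows a CE that is also valid in the risk-neutral model to have arbitrarily bad welfare once we switch to the variance-averse evaluation.
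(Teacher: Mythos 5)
Your construction is correct and is essentially the paper's own proof: the same two-bidder correlated profile in which exactly one player is recommended a high bid (the paper uses bid $=$ value $=1$ rather than $b_{high}>v$, an immaterial difference), with welfare evaluated via the unconditional variance so that each bidder's utility is $v/2-v/2=0$ at $\gamma=1$ while payments are zero. You additionally spell out the incentive verification and the conditional-versus-unconditional variance point that the paper leaves implicit, but the argument is the same.
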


The proof can be found in the Appendix, Section~\ref{sec:observation}. This is not only a difference between smoothness and weak smoothness. Our final result is a mechanism that is $(\lambda,\mu)$-smooth for constant $\lambda$ and $\mu$ but has unbounded price of anarchy. 

\begin{theorem}\label{thm:variance_smoothness}
For any constant $\gamma>0$ there is a mechanism that is $(\lambda,\mu)$-smooth with respect to quasilinear utility functions for constant $\lambda$ and $\mu$ but has unbounded price of anarchy in the variance-aversion model.
\end{theorem}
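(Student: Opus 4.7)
The plan is to take a standard smooth mechanism -- the single-item first-price auction -- and perturb it by injecting mean-zero randomness of large standard deviation into the winner's payment. Because expected payments and expected utilities are unchanged, the perturbed mechanism inherits quasilinear $(\lambda,\mu)$-smoothness from the first-price auction; but the added variance makes any positive bid unattractive to a sufficiently variance-averse bidder, so the all-zero-bid profile becomes a pure Nash equilibrium with welfare zero on a range of valuations.

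Concretely, for a given $\gamma>0$ pick a constant $\sigma_0$ and define the mechanism as follows. Each bidder submits $b_i\in\mathbb{R}_{\ge0}$; the highest bidder wins (ties broken uniformly); the winner pays $b_i+Z_i$, where independently for each bidder $Z_i$ is a two-point random variable on $\{-b_i,\sigma_0^2/b_i\}$ with probabilities $\sigma_0^2/(b_i^2+\sigma_0^2)$ and $b_i^2/(b_i^2+\sigma_0^2)$. A direct computation yields $\mathbb{E}[Z_i]=0$ and $\mathrm{SD}(Z_i)=\sigma_0$, and since $Z_i\ge-b_i$ the payment remains non-negative. The Syrgkanis--Tardos smoothness argument for first-price uses only expected payments and expected utilities, so it applies verbatim, giving $(1-1/e,1)$-smoothness with respect to quasilinear utilities for every valuation profile, with constants independent of $\gamma$.

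For the variance-averse PoA, take two bidders with the same valuation $v$ satisfying $0<v<\gamma\sigma_0$. At the all-zero-bid profile no item is allocated and welfare is $0$. Any pure deviation to a bid $b>0$ wins the item deterministically, producing a utility with mean $v-b$ and standard deviation exactly $\sigma_0$, so its variance-averse value is $(v-b)-\gamma\sigma_0<0$. For a mixed deviation that bids $b$ with probability $q\in(0,1]$, the standard deviation of the resulting utility is at least $\sqrt{q}\,\sigma_0$, and one checks directly that $q(v-b)-\gamma\sqrt{q}\,\sigma_0\le0$ whenever $v<\gamma\sigma_0$ (via $\sqrt{q}(v-b)\le\gamma\sigma_0$). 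So the all-zero profile is a Nash equilibrium of welfare $0$, while giving the item to either bidder deterministically yields welfare $v>0$; the price of anarchy is infinite.

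The main obstacle is reconciling two demands on the noise: its standard deviation must be arbitrarily large so that any positive bid is deterred in the variance-aversion sense, yet the noise cannot violate the non-negative-payment constraint or alter expected quasilinear utilities, which would break the first-price smoothness argument. The asymmetric two-point distribution above achieves both: putting mass on $Z_i=-b_i$ saturates the payment constraint, while the complementary small probability on the large value $\sigma_0^2/b_i$ pushes $\mathrm{SD}(Z_i)$ up to any target $\sigma_0$ while leaving $\mathbb{E}[Z_i]=0$ intact. Smoothness then reduces to the first-price analysis, and the bad equilibrium reduces to the single inequality $v<\gamma\sigma_0$.
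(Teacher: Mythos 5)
Your route is genuinely different from the paper's. The paper stays inside its formal framework of deterministic mechanisms: it uses a payment-free two-item assignment mechanism for unit-demand bidders (player 1 gets her claimed favourite item, player 2 the remaining one unless she opts out), which is $(1/c,0)$-smooth by construction; the variance that scares the second, variance-averse player is generated \emph{endogenously} by the first player's mixed strategy (an indifferent player with value $\epsilon$ mixing between the two items), and choosing $c=4/\gamma^2+3$, $q=c-1$ makes opting out a best response, giving welfare $\epsilon$ against $\OPT=c$. You instead keep the single-item first-price auction and inject exogenous mean-zero noise of standard deviation $\sigma_0$ into the winner's payment, so that any positive bid carries a $\gamma\sigma_0$ penalty; your smoothness constants $(1-1/e,1)$ are then independent of $\gamma$ (the paper's $\lambda=1/c$ degrades with $\gamma$, though it is still constant for fixed $\gamma$), and your bad equilibrium is pure rather than mixed. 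Your mixed-deviation bound via $\Var \ge q\sigma_0^2$ and $\sqrt{q}(v-b)<\gamma\sigma_0$ is correct.

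Two points need attention. First, as written your construction is internally inconsistent at the crucial profile: you define the allocation as ``the highest bidder wins (ties broken uniformly)'' but then assert that at the all-zero-bid profile no item is allocated. With the stated rule the item \emph{is} allocated for free at that profile, each bidder gets variance-averse utility $\tfrac{v}{2}(1-\gamma)$, and the equilibrium welfare is $(1-\gamma)v$, so the PoA is at most $2/(1-\gamma)$ and the argument fails for every fixed $\gamma<1$; moreover $Z_i$ is undefined at $b_i=0$. The fix is easy and clearly in the spirit of what you intend -- declare a zero bid to mean opting out, so the item is sold only if some bid is strictly positive (the smoothness deviation uses positive bids with probability one, so smoothness is unaffected) -- but it must be stated. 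Second, your mechanism has a randomized payment rule, whereas the paper's model defines a mechanism as deterministic maps $X\colon\mathcal{A}\to\mathcal{X}$, $P\colon\mathcal{A}\to\mathbb{R}_+^n$, and the variance-aversion model penalizes variance over strategy (and type) randomness; extending the variance penalty to the mechanism's internal coin flips is natural, and quasilinear smoothness indeed survives because it only sees expectations, but this is a modeling extension you should acknowledge -- the paper's construction deliberately avoids it by producing all the variance from an opponent's mixing in a deterministic, payment-free mechanism.
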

\begin{proof}
Consider a setting with two items and two players, who have unit-demand valuation functions such that $\frac{1}{c} v_{i, 1} \leq v_{i, 2} \leq c v_{i, 1}$ for constant $c \geq 1$. The players' possible actions are to either report one of the two items as preferred or to opt out entirely. Our mechanism first assigns player 1 her (claimed) favorite item, then assigns player 2 the remaining one unless she opts out. There are no payments. Obviously, this mechanism is $(\frac{1}{c}, 0)$-smooth because the allocation is within a $\frac{1}{c}$-factor of the optimal allocation by construction.

We will now construct a mixed Nash equilibrium of bad welfare. To this end, let $v_{1, 1} = v_{1, 2} = \epsilon$ for some small $\epsilon > 0$. This makes player 1 indifferent between items 1 and 2. In particular, it is a best response to ask for item 1 with probability $\frac{q-1}{q}$ and for item 2 with probability $\frac{1}{q}$. We note at this point that in a Bayes-Nash equilibrium we could make this respective action the unique best response by having random types.

For player 2, we set $v_{2, 1} = c$, $v_{2, 2} = 1$. She has the choice of participating or opting out. Opting out implies utility $0$, whereas participating implies
\begin{align*}
u_2(\mathbf{a}) 
&= \frac{c+q-1}{q} - \gamma\sqrt{\frac{(c-1)^2(q-1)}{q^2}}= \frac{(c-1)(1-\gamma\sqrt{q-1})}{q}+1
\end{align*}
Now, if we set $q=c-1$, then $u_2(\mathbf{a})=2-\gamma\sqrt{c-2}$ which is negative for $c>\frac{4}{\gamma^2}+2$. We further set $c=\frac{4}{\gamma^2}+3$.
That is, player 2 prefers to opt out. This outcome has social welfare $\epsilon$ whereas the optimal social welfare is $c$.
\end{proof}

Note that this last example shows that variance-averseness yields very strange preferences for lotteries. In our example, the variance-averse player prefers not to participate although any outcome in the (free) lottery has positive value.

\bibliography{bibliography}


\newpage
\appendix
\section{Proof of Theorem~\ref{thm:smoothness_CCE_and_BNE}}\label{sec:missing_proofs}
\subsection{Full Information Setting}
\begin{proof}
Let $\bf a$ be a correlated equilibrium. This means that for every $a_i$ in the support of $\bf a$
\[
\mathbb{E}_{\mathbf{a}_{-i}\vert a_i}[u_i^{\theta_i}(a_i, \mathbf{a}_{-i})]\ge \mathbb{E}_{\mathbf{a}_{-i}\vert a_i}[u_i^{\theta_i}(a_i', \mathbf{a}_{-i})], \forall a_i'\in\mathcal{A}_i, \forall i\enspace.
\]
Applying the equilibrium property to $a_i'=a_i^*(\boldsymbol{\theta}, a_i)$, we know that for every $a_i$ in the support of $\bf a$:
\[
\mathbb{E}_{\mathbf{a}_{-i}\vert a_i}[u_i^{\theta_i}(a_i, \mathbf{a}_{-i})]\ge \mathbb{E}_{\mathbf{a}_{-i}\vert a_i}[u_i^{\theta_i}(a_i^*(\boldsymbol\theta, a_i), \mathbf{a}_{-i})], \forall i\enspace.
\]
If we now take the expectation over $a_i$ and add over all players:
\[
\mathbb{E}_{\mathbf{a}}[\sum_i u_i^{\theta_i}(\mathbf{a})]\ge \mathbb{E}_{\mathbf{a}}[\sum_i u_i^{\theta_i}(a_i^*(\boldsymbol\theta, a_i), \mathbf{a}_{-i})]\ge \lambda\OPT(\boldsymbol\theta) - \mu\mathbb{E}_{\mathbf{a}}[\sum_i p_i(\mathbf{a})]\enspace,
\]
and further by adding  $\mathbb{E}_{\mathbf{a}}[\sum_i p_i(\mathbf{a})]$ to both sides
\begin{multline*}
\mathbb{E}_{\mathbf{a}}[\sum_i u_i^{\theta_i}(\mathbf{a}) + \sum_i p_i(\mathbf{a})]\ge \mathbb{E}_{\mathbf{a}}[\sum_i u_i^{\theta_i}(a_i^*(\boldsymbol\theta, a_i), \mathbf{a}_{-i})]\\
\ge \lambda\OPT(\boldsymbol\theta) + (1 - \mu)\mathbb{E}_{\mathbf{a}}[\sum_i p_i(\mathbf{a})]\enspace.
\end{multline*}
The result follows by doing a case distinction over $\mu\le1$ and $\mu>1$. In the first case, we immediately get
\[
\mathbb{E}_{\mathbf{a}}[\sum_i u_i^{\theta_i}(\mathbf{a}) + \sum_i p_i(\mathbf{a})]
\ge \lambda\OPT(\boldsymbol\theta) + (1 - \mu)\mathbb{E}_{\mathbf{a}}[\sum_i p_i(\mathbf{a})]
\ge \lambda\OPT(\boldsymbol\theta)\enspace,
\]
and in the latter case we use the fact that $\mathbb{E}_{\mathbf{a}}[\sum_i u_i^{\theta_i}(\mathbf{a})]\ge 0$. Then also 
\[
\mathbb{E}_{\mathbf{a}}[\sum_i u_i^{\theta_i}(\mathbf{a})] + \mathbb{E}_{\mathbf{a}}[\sum_i p_i(\mathbf{a})]\ge \mathbb{E}_{\mathbf{a}}[\sum_i p_i(\mathbf{a})]\enspace,
\] 
which results in 
\[
\mathbb{E}_{\mathbf{a}}[\sum_i u_i^{\theta_i}(\mathbf{a}) + \sum_i p_i(\mathbf{a})]
\ge \lambda\OPT(\boldsymbol\theta) + (1 - \mu)\mathbb{E}_{\mathbf{a}}[\sum_i u_i^{\theta_i}(\mathbf{a}) + \sum_i p_i(\mathbf{a})]
\] 
and finally 
\[\mathbb{E}_{\mathbf{a}}[\sum_i u_i^{\theta_i}(\mathbf{a}) + \sum_i p_i(\mathbf{a})]
\ge \frac{\lambda}{\mu}\OPT(\boldsymbol\theta)\enspace.
\]
\end{proof}

\subsection{Bayesian Setting}
\begin{proof}
For reasons of clarity, we prove the claim for the simpler case of pure BNE. First, we let each player $i$ sample a type profile $\boldsymbol\zeta\sim \times_i F_i$ and play $a_i^*((\theta_i, \boldsymbol\zeta_{-i}), s_i(\zeta_i))$.

\begin{align*}
\mathbb{E}_{\boldsymbol\theta}[u_i^{\theta_i}(s(\boldsymbol\theta))]&\geq \mathbb{E}_{\boldsymbol\theta,\boldsymbol\zeta}[u_i^{\theta_i}(a_i^*((\theta_i, \boldsymbol\zeta_{-i}), s_i(\zeta_i)), s_{-i}(\boldsymbol\theta_{-i}))]\\
&=\mathbb{E}_{\boldsymbol\theta,\boldsymbol\zeta}[u_i^{\zeta_i}(a_i^*((\zeta_i, \boldsymbol\zeta_{-i}), s_i(\theta_i)), s_{-i}(\boldsymbol\theta_{-i}))]\\
&=\mathbb{E}_{\boldsymbol\theta,\boldsymbol\zeta}[u_i^{\zeta_i}(a_i^*(\boldsymbol\zeta, s_i(\theta_i)), s_{-i}(\boldsymbol\theta_{-i}))]
\end{align*}

Summing over the players and using the smoothness property, we get
\begin{multline*}
\mathbb{E}_{\boldsymbol\theta}\Bigg[\sum_i u_i^{\theta_i}(s(\boldsymbol\theta))\Bigg]
\geq \mathbb{E}_{\boldsymbol\theta, \boldsymbol\zeta}\Bigg[\sum_i u_i^{\zeta_i}(a_i^*(\boldsymbol\zeta, s_i(\theta_i)), s_{-i}(\boldsymbol\theta_{-i}))\Bigg]\\\geq\mathbb{E}_{\boldsymbol\theta, \boldsymbol\zeta}\Bigg[\lambda\OPT(\boldsymbol\zeta)-\mu\sum_i p_i(s(\boldsymbol\theta))\Bigg]=\lambda\mathbb{E}_{\boldsymbol\theta}[\OPT(\boldsymbol\theta)] - \mu\mathbb{E}_{\boldsymbol\theta}\Bigg[\sum_i p_i(s(\boldsymbol\theta))\Bigg]\enspace,
\end{multline*}
and therefore
\[
\mathbb{E}_{\boldsymbol\theta}\Bigg[\sum_i u_i^{\theta_i}(s(\boldsymbol\theta)) + \sum_i p_i(s(\boldsymbol\theta))\Bigg]\geq \lambda\mathbb{E}_{\boldsymbol\theta}[\OPT(\boldsymbol\theta)] + (1-\mu)\mathbb{E}_{\boldsymbol\theta}\Bigg[\sum_i p_i(s(\boldsymbol\theta))\Bigg]\enspace,
\]
from where the result follows by case distinction over $\mu$, as in the proof for the full information setting.

The generalization to a mixed Bayes-Nash equilibrium is now straightforward.
\end{proof}

\section{Relaxation of the normalization assumption}\label{sec:normalization_relaxation}

We assume the following \emph{relaxed normalized risk-averse utilities}:
\begin{enumerate}
\item \label{prop:monotonicity_r} $u_i^{v_i}(\mathbf{x}, p_i) \geq u_i^{v_i}(\mathbf{x}, p_i')$ if $p_i \leq p_i'$ (monotonicity)
\item \label{prop:first_r} $u_i^{v_i}(\mathbf{x}, p_i)=0$ if $p_i=v_i(\mathbf{x})$ (normalization at $p_i=v_i(\mathbf{x})$)
\item \label{prop:second_r} $u_i^{v_i}(\mathbf{x},p_i)=v_i(\mathbf{x})$ if $p_i=0$ (normalization at $p_i=0$)
\item \label{prop:third_r} $u_i^{v_i}(\mathbf{x},p_i)\geq C\cdot(v_i(\mathbf{x})-p_i)$ if $0\leq p_i\leq v_i(\mathbf{x}), 0<C<1, C$ constant and $u_i^{v_i}(\mathbf{x},p_i)\leq v_i(\mathbf{x})-p_i$ otherwise (extra relaxed concavity)
\end{enumerate}

\begin{lemma}\label{lemma:smooth}
If a mechanism is $(\lambda, \mu)$-smooth w.r.t. quasilinear utility functions $(\hat{u}_i^{v_i})_{i\in N, v_i\in\mathcal{V}_i}$ and the actions in the support of the smoothness deviations satisfy 
\begin{equation}
\hat{u}_i(a_i^*, \mathbf{a}_{-i})\geq0, \forall \mathbf{a}_{-i}, \forall i \enspace,
\end{equation}
then the mechanism is $(C\cdot\lambda/2, C\cdot\mu)$-smooth with respect to any relaxed normalized risk-averse utility functions $(u_i^{v_i})_{i\in N, v_i\in\mathcal{V}_i}$.
\end{lemma}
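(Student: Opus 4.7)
My plan is to follow the exact same three-step argument as in the proof of Lemma~\ref{lemma:smooth}, but to track where the extra factor $C$ appears and to verify that Lemma~\ref{lemma:OPT_relation} still goes through under the relaxed normalization.

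First I would start from an arbitrary action profile $\mathbf{a}$ and invoke quasilinear $(\lambda,\mu)$-smoothness to obtain, for each $i$, a randomized deviation $a_i^*(\boldsymbol\theta, a_i)$ satisfying both the smoothness inequality $\sum_i \hat{u}_i^{v_i}(a_i^*, \mathbf{a}_{-i}) \ge \lambda \widehat{\OPT} - \mu \sum_i p_i(\mathbf{a})$ and the extra hypothesis $\hat{u}_i(a_i^*, \mathbf{a}_{-i}) \ge 0$ for all $i$ and all $\mathbf{a}_{-i}$. In particular, for the outcome and payment $(\mathbf{x}^*,p_i^*)$ induced by this deviation we have $0 \le p_i^* \le v_i(\mathbf{x}^*)$, so the first clause of relaxed property~\ref{prop:third_r} is the relevant one and yields $u_i^{v_i}(a_i^*, \mathbf{a}_{-i}) \ge C\cdot \hat{u}_i^{v_i}(a_i^*, \mathbf{a}_{-i})$.

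Next I would sum these pointwise bounds over $i$ and chain them with the quasilinear smoothness inequality, obtaining
\[
\sum_i u_i^{v_i}(a_i^*, \mathbf{a}_{-i}) \;\ge\; C \sum_i \hat{u}_i^{v_i}(a_i^*, \mathbf{a}_{-i}) \;\ge\; C\lambda\, \widehat{\OPT} \;-\; C\mu \sum_i p_i(\mathbf{a}).
\]
To convert $\widehat{\OPT}$ into $\OPT$, I would reuse Lemma~\ref{lemma:OPT_relation}. The key observation here is that its proof only uses monotonicity, normalization at $p_i=0$, and the second clause of property~\ref{prop:third} (the one for $p_i > v_i(\mathbf{x})$); since the relaxed assumptions keep properties~\ref{prop:monotonicity_r} and~\ref{prop:second_r} unchanged and retain the same second clause in~\ref{prop:third_r}, the inequality $\OPT \le 2\widehat{\OPT}$ continues to hold without modification. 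Substituting $\widehat{\OPT} \ge \OPT/2$ yields
\[
\sum_i u_i^{v_i}(a_i^*, \mathbf{a}_{-i}) \;\ge\; \frac{C\lambda}{2}\, \OPT \;-\; C\mu \sum_i p_i(\mathbf{a}),
\]
which is precisely the definition of $(C\lambda/2,\, C\mu)$-smoothness with respect to the relaxed normalized risk-averse utilities.

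I do not foresee a real obstacle: the only subtle point is making sure that when property~\ref{prop:third_r} is applied at $(a_i^*, \mathbf{a}_{-i})$ we really are in the regime $0 \le p_i^* \le v_i(\mathbf{x}^*)$, which follows immediately from the hypothesis $\hat{u}_i(a_i^*, \mathbf{a}_{-i}) \ge 0$ together with $p_i^* \ge 0$. The remaining step, verifying that Lemma~\ref{lemma:OPT_relation} survives the relaxation, is a short inspection of its proof rather than a new calculation.
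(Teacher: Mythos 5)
Your proof is correct and follows essentially the same argument as the paper: use the non-negativity of the quasilinear utility at the deviation to invoke the first clause of the relaxed concavity property (gaining the factor $C$), chain with quasilinear smoothness, and convert $\widehat{\OPT}$ to $\OPT$ via Lemma~\ref{lemma:OPT_relation}. Your additional check that Lemma~\ref{lemma:OPT_relation} only relies on monotonicity, normalization at $p_i=0$, and the unchanged second clause of the concavity property is a sound observation that the paper leaves implicit.
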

\begin{proof}
We start from an arbitrary action profile $\mathbf{a}$ and want to satisfy Definition~\ref{def:smoothness}.
Since there exist smoothness deviations s.t. $\hat{u}_i(a_i^*, \mathbf{a}_{-i})=v_i(\mathbf{x}(a_i^*, \mathbf{a}_{-i}))-p_i\geq0, \forall \mathbf{a}_{-i}, \forall i$, we know from property~\ref{prop:third_r} of the relaxed risk aversion definition that $u_i^{v_i}(a_i^*, \mathbf{a}_{-i})\ge C\cdot\hat{u}_i^{v_i}(a_i^*, \mathbf{a}_{-i})$. Therefore,
\begin{align*}
\sum_i u_i^{v_i}(a_i^*, \mathbf{a}_{-i}) &\ge \sum_i C\cdot\hat{u}_i^{v_i}(a_i^*, \mathbf{a}_{-i})\ge C\cdot\lambda \widehat\OPT - C\cdot\mu \sum_i p_i(\mathbf{a})\\
&\ge\frac{C\cdot\lambda}{2} \OPT - C\cdot\mu \sum_i p_i(\mathbf{a})\enspace,
\end{align*}
where the last inequality follows from Lemma~\ref{lemma:OPT_relation}.
\end{proof}

Using Theorem~\ref{thm:smoothness_CCE_and_BNE}, we obtain the following theorem.

\begin{theorem}
If a mechanism is $(\lambda, \mu)$-smooth w.r.t. quasilinear utility functions $(\hat{u}_i^{v_i})_{i\in N, v_i\in\mathcal{V}_i}$ and the actions in the support of the smoothness deviations satisfy $\hat{u}_i(a_i^*, \mathbf{a}_{-i})\geq0, \forall \mathbf{a}_{-i}, \forall i$, then any Correlated Equilibrium in the full information setting and any Bayes-Nash Equilibrium in the Bayesian setting achieves efficiency at least $\frac{C\cdot\lambda}{2\cdot\max\{1, C\cdot\mu\}}$ of the expected optimal even in the presence of risk averse bidders.
\end{theorem}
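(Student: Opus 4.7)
The plan is to chain together the preceding relaxed-smoothness lemma with the general smoothness-to-price-of-anarchy conversion in Theorem~\ref{thm:smoothness_CCE_and_BNE}, exactly mirroring how Theorem~\ref{thm:main_result1} followed from Lemma~\ref{lemma:smooth} in the original normalization. The key observation is that Theorem~\ref{thm:smoothness_CCE_and_BNE} is stated for arbitrary utility functions $(u_i^{\theta_i})$, not only quasilinear ones, so it applies verbatim once the mechanism has been shown to be smooth with respect to the relaxed normalized risk-averse utilities $(u_i^{v_i})$.

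Concretely, I would first invoke the relaxed-smoothness lemma stated immediately before the theorem: under the hypotheses that the mechanism is $(\lambda, \mu)$-smooth with respect to the quasilinear utilities and that the deviation strategies $a_i^\ast$ satisfy $\hat{u}_i(a_i^\ast, \mathbf{a}_{-i}) \geq 0$, the mechanism is $(C\lambda/2,\, C\mu)$-smooth with respect to $(u_i^{v_i})$. Plugging this into Theorem~\ref{thm:smoothness_CCE_and_BNE} with parameters $\lambda' = C\lambda/2$ and $\mu' = C\mu$ immediately yields the claimed efficiency bound
\[
\frac{\lambda'}{\max\{1,\mu'\}} \;=\; \frac{C\lambda}{2 \cdot \max\{1,\, C\mu\}}
\]
for every Correlated Equilibrium in the full-information setting and every Bayes-Nash Equilibrium in the Bayesian setting.

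Since the argument is essentially a two-step composition, there is no substantial obstacle. The one point that deserves a quick sanity check is that Lemma~\ref{lemma:OPT_relation} (invoked inside the relaxed-smoothness lemma to pass from $\widehat{\OPT}$ to $\OPT$) remains valid under the relaxed normalization: its proof only uses monotonicity, the normalization $u_i^{v_i}(\mathbf{x},0) = v_i(\mathbf{x})$, and the upper-bound branch $u_i^{v_i}(\mathbf{x},p_i) \leq v_i(\mathbf{x}) - p_i$ for $p_i > v_i(\mathbf{x})$, all of which are inherited unchanged from Properties~\ref{prop:monotonicity_r}, \ref{prop:second_r}, and the second half of Property~\ref{prop:third_r}. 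The extra factor of $C$ in both coordinates of the smoothness pair is precisely the cost of replacing $u_i^{v_i} \geq v_i - p_i$ on $[0, v_i(\mathbf{x})]$ by its weakened form $u_i^{v_i} \geq C(v_i - p_i)$, a weakening applied exactly once to the non-negative quantity $\hat{u}_i(a_i^\ast, \mathbf{a}_{-i})$ supplied by the hypothesis on the deviation.
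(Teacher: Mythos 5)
Your proof is correct and follows the same route as the paper, which likewise obtains the bound by combining the relaxed-smoothness lemma (giving $(C\lambda/2,\,C\mu)$-smoothness with respect to the relaxed normalized risk-averse utilities) with Theorem~\ref{thm:smoothness_CCE_and_BNE}. Your extra check that Lemma~\ref{lemma:OPT_relation} survives the relaxed normalization is a sensible point that the paper leaves implicit.
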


\section{Weak Smoothness}\label{sec:weak_smoothness}
\subsection{Extension to General Utility Functions}

For second-price auctions and their generalizations, for example, the already stated theorems do not suffice to prove guarantees on the quality of equilibria. One in addition needs a no-overbidding assumption. To state this assumption, we first need the notion of willingness-to-pay that was originally defined in \cite{SyrgkanisT13}.


%

\begin{definition}[Willingness-to-pay]
Given a mechanism $(\mathcal{A}, X, P)$ a player's maximum willingness-to-pay for an allocation $\mathbf{x}$ when using strategy $a_i$ is defined as the maximum he could ever pay conditional on allocation $\mathbf{x}$:
\[
W_i(a_i,\mathbf{x})=\max_{\mathbf{a}_{-i}:X(\mathbf{a})=\mathbf{x}} p_i(\mathbf{a})\enspace.
\]
\end{definition}

Now, we can state weak smoothness.

\begin{definition}[Weakly Smooth Mechanism]\label{def:weaksmoothness}
A mechanism $M$ is weakly $(\lambda, \mu_1, \mu_2)$-smooth with respect to utility functions $(u_i^{\theta_i})_{\theta_i\in\Theta_i, i\in N}$ for $\lambda, \mu_1, \mu_2 \ge 0$, if for any type profile $\boldsymbol\theta \in \times_i \Theta_i$ and for any action profile $\mathbf{a}$ there exists a randomized action $a^*_i(\boldsymbol\theta, a_{i})$ for each player $i$, s.t.:
\begin{equation*}\label{eq:weaksmoothness}
\sum_i u_i^{\theta_i}(a^*_i(\boldsymbol\theta, a_{i}), \mathbf{a}_{-i}) \ge \lambda \OPT(\boldsymbol\theta) - \mu_1 \sum_i p_i(\mathbf{a}) - \mu_2 \sum_i W_i(a_i, X(\mathbf{a}))\enspace. 
\end{equation*}
We denote by $u_i^{\theta_i}(\mathbf{a})$ the expected utility
of a player if $\mathbf{a}$ is a vector of randomized strategies.
\end{definition} 

Note that $(\lambda, \mu)$-smoothness implies weak $(\lambda, \mu, 0)$-smoothness. We get the following generalization of the price-of-anarchy guarantees for equilibria that fulfill the aforementioned no-overbidding assumption on the players' willigness-to-pay:

\begin{theorem}\label{thm:weak_smoothness}
If a mechanism is weakly $(\lambda, \mu_1, \mu_2)$-smooth w.r.t. utility functions $(u_i^{\theta_i})_{\theta_i\in\Theta_i, i\in N}$, then any Correlated Equilibrium in the full information setting and any Bayes-Nash Equilibrium in the Bayesian setting that satisfies
\begin{equation}
\mathbb{E}_{\mathbf{a}}[W_i( a_i, X( \mathbf{a}))] \leq \mathbb{E}_{\mathbf{a}}[u_i^{\theta_i}( \mathbf{a}) + p_i( \mathbf{a})]
\label{eq:generic_no-overbidding}
\end{equation}
achieves efficiency of at least $\frac{\lambda}{(\mu_2 + \max\{\mu_1,1\})}$ of $\OPT(\boldsymbol\theta)$ or of $\mathbb{E}_{\boldsymbol\theta}[\OPT(\boldsymbol\theta)]$, respectively.
\end{theorem}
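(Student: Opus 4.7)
The plan is to mirror the proof of Theorem~\ref{thm:smoothness_CCE_and_BNE} already given in Section~\ref{sec:missing_proofs}, adding two ingredients: carrying the extra willingness-to-pay term through the chain of inequalities, and cancelling it using the no-overbidding hypothesis~\eqref{eq:generic_no-overbidding}.

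I would first handle the correlated-equilibrium case. Starting from a CE $\mathbf{a}$ and applying the equilibrium condition to the deviation $a_i^*(\boldsymbol\theta,a_i)$ prescribed by weak smoothness, summing over players and taking expectation over $\mathbf{a}$ gives
\[
\mathbb{E}_{\mathbf{a}}\Bigl[\sum_i u_i^{\theta_i}(\mathbf{a})\Bigr] \;\ge\; \mathbb{E}_{\mathbf{a}}\Bigl[\sum_i u_i^{\theta_i}(a_i^*(\boldsymbol\theta,a_i),\mathbf{a}_{-i})\Bigr] \;\ge\; \lambda\OPT(\boldsymbol\theta) - \mu_1\mathbb{E}_{\mathbf{a}}\Bigl[\sum_i p_i(\mathbf{a})\Bigr] - \mu_2 \mathbb{E}_{\mathbf{a}}\Bigl[\sum_i W_i(a_i,X(\mathbf{a}))\Bigr].
\]
Plugging in the no-overbidding bound $\mathbb{E}_{\mathbf{a}}[W_i(a_i,X(\mathbf{a}))]\le \mathbb{E}_{\mathbf{a}}[u_i^{\theta_i}(\mathbf{a})+p_i(\mathbf{a})]$, writing $U=\mathbb{E}_{\mathbf{a}}[\sum_i u_i^{\theta_i}(\mathbf{a})]$ and $P=\mathbb{E}_{\mathbf{a}}[\sum_i p_i(\mathbf{a})]$, I obtain
\[
(1+\mu_2)\,U + (\mu_1+\mu_2)\,P \;\ge\; \lambda\OPT(\boldsymbol\theta).
\]
Then a short case distinction finishes: if $\mu_1\le 1$ the left-hand side is at most $(1+\mu_2)(U+P)=(1+\mu_2)\,\SW$; if $\mu_1>1$, using individual rationality $U\ge 0$, it is at most $(\mu_1+\mu_2)(U+P)=(\mu_1+\mu_2)\,\SW$. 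Both cases yield $\SW\ge\frac{\lambda}{\mu_2+\max\{1,\mu_1\}}\OPT(\boldsymbol\theta)$.

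For the Bayes-Nash case I would use the same hybrid argument as in the proof of Theorem~\ref{thm:smoothness_CCE_and_BNE}: each player $i$ samples an independent fictitious type profile $\boldsymbol\zeta\sim F$ and deviates to $a_i^*((\theta_i,\boldsymbol\zeta_{-i}), s_i(\zeta_i))$. Applying the BNE condition, summing over $i$, and relabelling $\boldsymbol\theta$ and $\boldsymbol\zeta$ as in Section~\ref{sec:missing_proofs} produces
\[
\mathbb{E}_{\boldsymbol\theta}\Bigl[\sum_i u_i^{\theta_i}(s(\boldsymbol\theta))\Bigr] \;\ge\; \lambda\,\mathbb{E}_{\boldsymbol\theta}[\OPT(\boldsymbol\theta)] - \mu_1\mathbb{E}_{\boldsymbol\theta}\Bigl[\sum_i p_i(s(\boldsymbol\theta))\Bigr] - \mu_2\mathbb{E}_{\boldsymbol\theta}\Bigl[\sum_i W_i(s_i(\theta_i),X(s(\boldsymbol\theta)))\Bigr].
\]
Then the no-overbidding hypothesis, applied per player to the induced distribution $\mathbf{a}=s(\boldsymbol\theta)$, removes the willingness-to-pay term, and the identical case distinction on $\mu_1$ yields the claimed factor.

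The only delicate point is step~2 (the cancellation): one has to make sure that the hybrid/relabelling trick leaves the arguments of $W_i$ as $s_i(\theta_i)$ and $X(s(\boldsymbol\theta))$ so that~\eqref{eq:generic_no-overbidding} applies verbatim under the distribution of $\boldsymbol\theta$. This is why the $\mu_2$ term gets paired with the ``unperturbed'' action profile rather than the deviation. Once this is in place the arithmetic is routine; the case analysis on $\mu_1$ versus $1$ is precisely the one already used for the plain smoothness theorem.
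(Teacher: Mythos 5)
Your proposal is correct and follows essentially the same route as the paper's own proof: apply the equilibrium condition to the weak-smoothness deviations (with the hybrid type-resampling argument in the Bayesian case), cancel the willingness-to-pay term via the no-overbidding hypothesis applied to the unperturbed equilibrium profile, and finish with the case distinction on $\mu_1$ versus $1$ using non-negativity of equilibrium utility. The rearrangement $(1+\mu_2)U+(\mu_1+\mu_2)P\ge\lambda\OPT(\boldsymbol\theta)$ is algebraically identical to the paper's displayed inequality, so there is nothing substantive to add.
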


In the quasilinear setting, \eqref{eq:generic_no-overbidding} simplifies to the no-overbidding assumption $\mathbb{E}_{\mathbf{a}}[W_i(a_i, X(\mathbf{a}))]\le \mathbb{E}_{\mathbf{a}}[v_i(X(\mathbf{a}))]$ that was introduced in \cite{SyrgkanisT13}, and that is a generalization of the no-overbidding assumptions previously used in the literature~\cite{ChristodoulouKS08,BhawalkarR11,CaragiannisKKKLLT15}. That is, players cannot pay more than their respective value, regardless of the other players' actions.

\begin{proof}
For the complete information setting, we show that
\[
\sum_i u_i^{\theta_i} (\mathbf{a})\geq \lambda\cdot \widehat{\OPT} - \mu_1\sum_i p_i(\mathbf{a}) - \mu_2 \sum_i W_i(a_i, X_i(\mathbf{a}))\enspace.
\]
Using~\eqref{eq:generic_no-overbidding},
\[
(1+\mu_2)\Bigg[\sum_i u_i^{\theta_i}(\mathbf{a}) + \sum_i p_i(\mathbf{a})\Bigg]\ge \lambda \widehat{\OPT} - (\mu_1-1)\sum_i p_i(\mathbf{a})\enspace.
\]
By doing a case distinction over $\mu\le1$ and $\mu>1$, we get the claimed result.

For the incomplete information setting, we arrive at
\begin{multline*}
\mathbb{E}_{\boldsymbol\theta}\Bigg[\sum_i u_i^{\theta_i}(s(\boldsymbol\theta))\Bigg]\geq \lambda\mathbb{E}_{\boldsymbol\theta}[\widehat{\OPT}] - \mu_1\mathbb{E}_{\boldsymbol\theta}\Bigg[\sum_i p_i(s(\boldsymbol\theta))\Bigg]\\
-\mu_2 \mathbb{E}_{\boldsymbol\theta}\Bigg[\sum_i W_i(s_i(\theta_i), X_i(s(\boldsymbol\theta)))\Bigg]\enspace.
\end{multline*}
The result now follows by using the no-overbidding assumption~\eqref{eq:generic_no-overbidding} and case distinction, similar to the full information case.
\end{proof}

In a \emph{second-price auction}, the winner has to pay the second highest bid, the other players do not pay anything. In the quasilinear setting it is weakly $(1, 0, 1)$-smooth.

\subsection{Risk-Averse Utilities}
We will assume the following pointwise condition:

\begin{definition}[Pointwise No-Overbidding]
A randomized strategy profile $\mathbf{a}$ satisfies the pointwise no-overbidding assumption if for every player $i$ and every action in the support of $\mathbf{a}$ the following holds:
\[
W_i(a_i, \mathbf{x}):= \max_{\mathbf{a}_{-i}:X(\mathbf{a})=\mathbf{x}} p_i(\mathbf{a}) \le v_i(\mathbf{x})\enspace,
\]
i.e. no player is pointwise bidding in a way that she could potentially pay more than her value, subject to her allocation remaining the same.
\end{definition}

\begin{theorem}\label{thm:weak_smoothness_risk}
If a mechanism is weakly $(\lambda, \mu_1, \mu_2)$-smooth with respect to quasilinar utility functions $(\hat{u}_i^{v_i})_{i\in N, v_i\in\mathcal{V}_i}$, the actions in the support of the smoothness deviations satisfy $\hat{u}_i(a_i^*, \mathbf{a}_{-i})\geq0, \forall \mathbf{a}_{-i}, \forall i$, then any Correlated Equilibrium in the full information setting and any Bayes-Nash Equilibrium in the Bayesian setting that satisfies the pointwise no-overbidding assumption achieves efficiency at least $\frac{\lambda}{2\cdot(\mu_2 + \max\{\mu_1,1\})}$ of the expected optimal even in the presence of risk-averse bidders.
\end{theorem}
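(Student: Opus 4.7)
The plan is to combine three ingredients: the lifting of weak smoothness from quasilinear to risk-averse utilities (parallel to Lemma~\ref{lemma:smooth}), the observation that pointwise no-overbidding implies the generic no-overbidding assumption~\eqref{eq:generic_no-overbidding} under risk-averse utilities, and finally an invocation of Theorem~\ref{thm:weak_smoothness} applied to the risk-averse setting. I would structure the argument in this order.

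\textbf{Step 1 (Lifting weak smoothness).} Starting from an arbitrary action profile $\mathbf{a}$, quasilinear weak smoothness supplies a deviation $a_i^*$ for each $i$ satisfying
\[
\sum_i \hat{u}_i^{v_i}(a_i^*, \mathbf{a}_{-i}) \ge \lambda \widehat\OPT - \mu_1 \sum_i p_i(\mathbf{a}) - \mu_2 \sum_i W_i(a_i, X(\mathbf{a})).
\]
By assumption, every action in the support of $a_i^*$ guarantees non-negative quasilinear utility, i.e.~$p_i \le v_i(X(a_i^*,\mathbf{a}_{-i}))$, so property~\ref{prop:third} of normalized risk aversion yields $u_i^{v_i}(a_i^*, \mathbf{a}_{-i}) \ge \hat{u}_i^{v_i}(a_i^*, \mathbf{a}_{-i})$ pointwise. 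Using Lemma~\ref{lemma:OPT_relation} to replace $\widehat\OPT$ with $\OPT/2$, one gets
\[
\sum_i u_i^{v_i}(a_i^*, \mathbf{a}_{-i}) \ge \tfrac{\lambda}{2}\OPT - \mu_1 \sum_i p_i(\mathbf{a}) - \mu_2 \sum_i W_i(a_i, X(\mathbf{a})),
\]
so the mechanism is weakly $(\lambda/2,\mu_1,\mu_2)$-smooth with respect to the risk-averse utilities.

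\textbf{Step 2 (Pointwise no-overbidding implies~\eqref{eq:generic_no-overbidding}).} For any $\mathbf{a}$ in the support of the equilibrium, pointwise no-overbidding gives $p_i(\mathbf{a}) \le W_i(a_i, X(\mathbf{a})) \le v_i(X(\mathbf{a}))$. Applying property~\ref{prop:third} once more,
\[
u_i^{v_i}(\mathbf{a}) + p_i(\mathbf{a}) \ge v_i(X(\mathbf{a})) - p_i(\mathbf{a}) + p_i(\mathbf{a}) = v_i(X(\mathbf{a})) \ge W_i(a_i, X(\mathbf{a})).
\]
Taking expectations yields the generic no-overbidding condition~\eqref{eq:generic_no-overbidding} required by Theorem~\ref{thm:weak_smoothness}.

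\textbf{Step 3 (Conclusion).} Combining Steps 1 and 2, Theorem~\ref{thm:weak_smoothness} applied to the risk-averse weak smoothness parameters $(\lambda/2,\mu_1,\mu_2)$ delivers the efficiency bound $\frac{\lambda/2}{\mu_2 + \max\{\mu_1,1\}} = \frac{\lambda}{2(\mu_2 + \max\{\mu_1,1\})}$, in both the full-information CE and the Bayesian BNE regimes. The only delicate point worth double-checking is Step 2: the pointwise no-overbidding hypothesis is what lets us upgrade $W_i \le v_i$ into the utility-plus-payment bound needed by Theorem~\ref{thm:weak_smoothness}, which would fail under a purely in-expectation no-overbidding assumption because risk-averse utilities need not satisfy the quasilinear identity $\hat{u}_i + p_i = v_i$ term-by-term. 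Otherwise the proof is a mechanical composition of the earlier ingredients.
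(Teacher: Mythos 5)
Your proposal is correct and follows essentially the same route as the paper: lifting quasilinear weak smoothness to risk-averse weak $(\lambda/2,\mu_1,\mu_2)$-smoothness via property~\ref{prop:third} and Lemma~\ref{lemma:OPT_relation} (the paper's Lemma~\ref{lemma:weakly_smooth}), showing that pointwise no-overbidding implies condition~\eqref{eq:generic_no-overbidding} through $W_i(a_i,\mathbf{x})\le v_i(\mathbf{x})\le u_i^{v_i}(\mathbf{a})+p_i(\mathbf{a})$, and then invoking Theorem~\ref{thm:weak_smoothness}. Your closing remark about why pointwise (rather than in-expectation) no-overbidding is needed in the risk-averse setting matches the paper's reasoning as well.
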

\begin{proof}
First, we show that weak smoothness with respect to quasilinear utility functions with the additional constraint that players have non-negative utility from the smoothness deviation implies weak smoothness with respect to risk-averse players. 

\begin{lemma}\label{lemma:weakly_smooth}
If a mechanism is weakly $(\lambda, \mu_1, \mu_2)$-smooth with respect to quasilinear utility functions $(\hat{u}_i^{v_i})_{i\in N, v_i\in\mathcal{V}_i}$ and the actions in the support of the smoothness deviations satisfy $\hat{u}_i(a_i^*, \mathbf{a}_{-i})\geq0, \forall a_{-i}, \forall i$, then the mechanism is weakly $(\lambda/2, \mu_1, \mu_2)$-smooth with respect to risk-averse utility functions $(u_i^{v_i})_{i\in N, v_i\in\mathcal{V}_i}$.
\end{lemma}
\begin{proof}
We start from an arbitrary action profile $\mathbf{a}$ and want to satisfy Definition~\ref{def:smoothness}.
Since there exist smoothness deviations s.t. $\hat{u}_i(a_i^*, \mathbf{a}_{-i})=v_i(\mathbf{x}(a_i^*, \mathbf{a}_{-i}))-p_i\geq0, \forall \mathbf{a}_{-i}, \forall i$, we know from property~\ref{prop:third} of the risk aversion definition that $u_i^{v_i}(a_i^*, \mathbf{a}_{-i})\ge \hat{u}_i^{v_i}(\mathbf{x}(a_i^*, \mathbf{a}_{-i}))$. Therefore,
\begin{align*}
\sum_i u_i^{v_i}(a_i^*, \mathbf{a}_{-i}) &\ge \sum_i \hat{u}_i^{v_i}(a_i^*, \mathbf{a}_{-i})\ge \lambda \widehat\OPT - \mu_1 \sum_i p_i(\mathbf{a}) - \mu_2 \sum_i W_i(a_i, X(\mathbf{a}))\\ 
&\ge\frac{\lambda}{2} \OPT - \mu_1 \sum_i p_i(\mathbf{a})- \mu_2 \sum_i W_i(a_i, X(\mathbf{a}))\enspace,
\end{align*}
where the last inequality follows from Lemma~\ref{lemma:OPT_relation}.
\end{proof}

Next, we will show that pointwise no-overbidding indeed implies the no-overbidding assumption~\eqref{eq:generic_no-overbidding}:

Using the pontwise no-overbidding assumption $v_i(\mathbf{x})\ge p_i$, we know that $u_i^{v_i}(\mathbf{x},p_i)\ge v_i(\mathbf{x})-p_i$. From here, $W_i(a_i, \mathbf{x}) \le v_i(\mathbf{x})\le u_i^{v_i}(\mathbf{a})+p_i(\mathbf{a})$, so we can conclude that
\[
\mathbb{E}_{\mathbf{a}}[W_i(a_i, X(\mathbf{a}))]\le 
\mathbb{E}_{\mathbf{a}}[u_i^{v_i}(\mathbf{a}) + p_i(\mathbf{a})]\enspace.
\]
Theorem~\ref{thm:weak_smoothness} now completes the proof.
\end{proof}

Using that the second-price auction is weakly $(1, 0 ,1)$-smooth with respect to quasilinear utilities, we immediately get that its price of anarchy is also constant in the risk-averse setting.

\begin{corollary}
Under normalized risk-averse utilities, the second-price auction has a constant price of anarchy for correlated and Bayes-Nash equilibria with pointwise no-overbidding.
\end{corollary}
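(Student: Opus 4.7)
The plan is to reduce the corollary to a direct application of Theorem~\ref{thm:weak_smoothness_risk}. That theorem takes as hypotheses (a) weak $(\lambda, \mu_1, \mu_2)$-smoothness of the mechanism with respect to quasilinear utilities, (b) that the deviation strategy used to witness smoothness always yields non-negative quasilinear utility, and (c) that the equilibrium satisfies pointwise no-overbidding. For the second-price auction, condition (a) with $(\lambda, \mu_1, \mu_2) = (1, 0, 1)$ is the classical result recalled immediately before the corollary, and condition (c) is part of the statement of the corollary. So the only real work is verifying condition (b).

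To check (b), I would unpack the standard smoothness deviation for the single-item second-price auction. Given a valuation profile $\mathbf{v}$, let $i^\ast = \arg\max_i v_i$ denote the optimal winner. The deviation in the usual proof sets $a^\ast_{i^\ast}$ to be a bid equal to $v_{i^\ast}$ and $a^\ast_j = 0$ for every $j \neq i^\ast$. For any non-winner under OPT, the deviation to $0$ yields utility $0$ regardless of the other bids, which is trivially non-negative. For $i^\ast$ bidding $v_{i^\ast}$, there are two cases independent of $\mathbf{a}_{-i^\ast}$: either $i^\ast$ loses and pays $0$ (utility $0$), or $i^\ast$ wins, which can only happen when $\max_{k \neq i^\ast} a_k \le v_{i^\ast}$, and then pays the second-highest bid, giving non-negative quasilinear utility $v_{i^\ast} - \max_{k \neq i^\ast} a_k \ge 0$. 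Hence $\hat{u}_i(a_i^\ast, \mathbf{a}_{-i}) \ge 0$ for every player and every opponent profile.

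With (a), (b), (c) in place, Theorem~\ref{thm:weak_smoothness_risk} applies with $\lambda = 1$, $\mu_1 = 0$, $\mu_2 = 1$, giving an efficiency guarantee of
\[
\frac{\lambda}{2(\mu_2 + \max\{\mu_1, 1\})} = \frac{1}{2(1 + 1)} = \frac{1}{4}\enspace,
\]
which is a constant and therefore proves the corollary. I do not expect a genuine obstacle here: the main issue is simply to be explicit about which deviation realizes the $(1,0,1)$-smoothness, so that the non-negativity hypothesis can be checked; once this is noted, the result follows immediately by invocation of the theorem, exactly parallel to how the constant price of anarchy of the first-price auction is obtained from Theorem~\ref{thm:main_result1} earlier in the paper.
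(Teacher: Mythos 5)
Your proposal is correct and follows the paper's own route: the corollary is obtained by plugging the weak $(1,0,1)$-smoothness of the second-price auction into Theorem~\ref{thm:weak_smoothness_risk}, yielding the constant $\frac{\lambda}{2(\mu_2+\max\{\mu_1,1\})}=\frac{1}{4}$. Your explicit check that the smoothness deviation (optimal winner bids her value, others bid $0$) never gives negative quasilinear utility is exactly the condition the paper leaves implicit, so there is no substantive difference.
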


\section{Budget Constraints}\label{sec:budgets}
The techniques and results so far have striking similarities to settings with budget constraints, where players do not have quasilinear preferences already in the risk neutral case. As it turns out, under very mild additional assumptions, we can also add (a generalized form of) hard budget constraints to our consideration. 

We now assume that types are pairs $\theta_i = (v_i, B_i)$, where $B_i\colon \mathcal{X}_i\to \mathbb{R}^+$ is an outcome-dependent budget function. Depending on which outcome is achieved, the agent may have different amounts of liquidity. We assume that there is a normalized risk-averse utility function $u_i^{v_i}$ such that for a player of type $\theta_i = (v_i, B_i)$
\[
u_i^{\theta_i}(\mathbf{x}, p_i) = \begin{cases} u_i^{v_i}(\mathbf{x}, p_i) & \text{ if $p_i \leq B_i(\mathbf{x})$} \\
- \infty & \text{ otherwise} \end{cases} \enspace.
\]

In the budgeted setting, one cannot hope to achieve full welfare. This is due to low budget participants not being able to maximize their contribution. Therefore, we will replace $\OPT(\boldsymbol\theta)$ in the price-of-anarchy and smoothness definition by the optimal \emph{effective} or \emph{liquid} welfare, given as $\max_{\mathbf{x}, \mathbf{p}} \sum_i \min\{u_i^{\theta_i}(\mathbf{x}, p_i) + p_i, B_i(\mathbf{x})\}$. This benchmark, introduced in~\cite{DobzinskiL14}, reflects that players with low budgets cannot be expected to be effective at maximizing their own value.

The effect of budgets on efficiency in the risk neutral case was already studied in~\cite{SyrgkanisT13}, where the authors, in order to be able to prove efficiency bounds, introduced the notion of a \emph{conservatively smooth mechanism} that has the following additional assumption on the smoothness deviations:
\begin{equation}\label{eq:conservatively_smooth}
\max_{\mathbf{a}_{-i}} p_i(a_i^*(\mathbf{v}, a_i), \mathbf{a}_{-i})\le \max_{\mathbf{x}} v_i(\mathbf{x})\enspace.
\end{equation}
Conservatively smooth mechanisms are then shown to allow the budgeted scenario without any further loss of efficiency.
Note that~(\ref{eq:conservatively_smooth}) is a weaker assumption than the Condition~(\ref{eq:nonnegutility}) we ask for. Therefore, we can easily extend our results for risk-averse bidders to the budgeted setting.

Our main result is that if the type space is chosen in a way that taking the pointwise minimum of a valuation function and a budget function yields again a feasible valuation function, meaning that we stay within the ``permitted'' valuation space when applying the budget costraints, then the price-of-anarchy guarantee is again preserved. The valuation space being closed under capping is a crucial requirement both for our result and the result in~\cite{SyrgkanisT13}.

\begin{theorem}\label{thm:budgets_risk}
If a mechanism is $(\lambda, \mu)$-smooth w.r.t. quasilinear utility functions $(\hat{u}_i^{v_i})_{i\in N, v_i\in\mathcal{V}_i}$, its valuation space is closed under capping with budget functions, the actions in the support of the smoothness deviations satisfy $\hat{u}_i(a_i^*, \mathbf{a}_{-i})\geq0, \forall \mathbf{a}_{-i}, \forall i$, then the social welfare at any Correlated Equilibrium and at any Bayes-Nash Equilibrium is at least $\frac{\lambda}{2\cdot\max\{1,\mu\}}$ of the expected maximum effective welfare even in the presence of risk-averse bidders.
\end{theorem}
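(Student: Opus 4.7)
The plan is to reduce to the non-budgeted risk-averse case of Lemma~\ref{lemma:smooth} by passing to \emph{capped valuations}, mirroring the Syrgkanis--Tardos budget analysis but with the extra ingredient needed to cross over to risk aversion. For each player $i$ with type $\theta_i=(v_i,B_i)$, define the capped valuation $\tilde v_i(\mathbf{x}) := \min\{v_i(\mathbf{x}), B_i(\mathbf{x})\}$, which lies in $\mathcal{V}_i$ by the closure-under-capping hypothesis. A short check shows that the quasilinear effective welfare of $(v_i,B_i)_i$ equals $\widetilde\OPT := \max_{\mathbf{x}} \sum_i \tilde v_i(\mathbf{x})$, since $\min\{v_i(\mathbf{x})-p_i+p_i, B_i(\mathbf{x})\}=\tilde v_i(\mathbf{x})$ whenever $p_i\leq B_i(\mathbf{x})$.

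Next, apply the $(\lambda,\mu)$-quasilinear smoothness of the mechanism to the profile $(\tilde v_i)_{i\in N}$. For any action profile $\mathbf{a}$ this produces deviations $a_i^\ast$ with $\sum_i \hat u_i^{\tilde v_i}(a_i^\ast,\mathbf{a}_{-i})\geq \lambda\widetilde\OPT-\mu\sum_i p_i(\mathbf{a})$ and, by hypothesis, $\hat u_i^{\tilde v_i}(a_i^\ast,\mathbf{a}_{-i})\geq 0$ pointwise. The latter is the linchpin: it forces $p_i(a_i^\ast,\mathbf{a}_{-i})\leq \tilde v_i(X(a_i^\ast,\mathbf{a}_{-i}))\leq B_i(X(a_i^\ast,\mathbf{a}_{-i}))$, so the deviation never triggers the $-\infty$ penalty and $u_i^{\theta_i}(a_i^\ast,\mathbf{a}_{-i})=u_i^{v_i}(a_i^\ast,\mathbf{a}_{-i})$ is finite. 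Since $p_i\leq v_i$ at the deviation outcome, Property~\ref{prop:third} of the normalization gives $u_i^{v_i}(a_i^\ast,\mathbf{a}_{-i})\geq v_i-p_i\geq \tilde v_i-p_i=\hat u_i^{\tilde v_i}(a_i^\ast,\mathbf{a}_{-i})$, so summing yields $\sum_i u_i^{\theta_i}(a_i^\ast,\mathbf{a}_{-i})\geq \lambda\widetilde\OPT-\mu\sum_i p_i(\mathbf{a})$.

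Finally, replace $\widetilde\OPT$ by the risk-averse effective benchmark $\OPT(\boldsymbol\theta) = \max_{\mathbf{x},\mathbf{p}} \sum_i \min\{u_i^{\theta_i}(\mathbf{x},p_i)+p_i, B_i(\mathbf{x})\}$ using a budget-adapted version of Lemma~\ref{lemma:OPT_relation}: pointwise, the bound $u_i^{v_i}(\mathbf{x},p_i)+p_i\leq 2 v_i(\mathbf{x})$ proved there, together with the trivial $B_i(\mathbf{x}) \leq 2 B_i(\mathbf{x})$, implies $\min\{u_i^{v_i}(\mathbf{x},p_i)+p_i, B_i(\mathbf{x})\}\leq \min\{2v_i(\mathbf{x}), 2B_i(\mathbf{x})\} = 2\tilde v_i(\mathbf{x})$, so $\OPT(\boldsymbol\theta)\leq 2\widetilde\OPT$. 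Hence the mechanism is $(\lambda/2,\mu)$-smooth with respect to the risk-averse budgeted utilities against the effective welfare benchmark, and Theorem~\ref{thm:smoothness_CCE_and_BNE} yields the claimed bound $\frac{\lambda}{2\max\{1,\mu\}}$ (its proof needs only non-negativity of equilibrium utilities, which is guaranteed by the available opt-out action). The main obstacle is the admissibility of $a_i^\ast$ in the budgeted game: the closure-under-capping assumption is what lets us plug $(\tilde v_i)_i$ into the quasilinear smoothness in the first place, and the non-negative-utility hypothesis on the deviation is precisely what prevents the budget from binding along the deviation path---without either ingredient the reduction collapses.
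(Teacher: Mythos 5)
Your proof is correct and follows essentially the same route as the paper's Lemma~\ref{lemma:budgets}: cap the valuations using closure, apply quasilinear $(\lambda,\mu)$-smoothness to the capped profile, use the non-negativity of the deviation utility together with Property~\ref{prop:third} to pass to the risk-averse budgeted utilities (the non-negativity also keeping payments below budgets), relate the benchmark by a factor $2$ in the spirit of Lemma~\ref{lemma:OPT_relation}, and conclude via Theorem~\ref{thm:smoothness_CCE_and_BNE}. The only differences are cosmetic: you skip the paper's intermediate step through the risk-averse utility with capped valuation $u_i^{\bar v_i}$, and you spell out explicitly the comparison $\OPT(\boldsymbol\theta)\le 2\widetilde\OPT$ that the paper only asserts when noting that $\OPT_{\bar{\mathbf{v}}}$ aligns with the effective welfare benchmark.
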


As before, we prove a lemma connecting smoothness with respect to quasilinear utilities to smoothness with respect to risk-averse ones.

\begin{lemma}\label{lemma:budgets}
If a mechanism is $(\lambda, \mu)$- smooth w.r.t. quasilinear utility functions $(\hat{u}_i^{v_i})_{i\in N, v_i\in\mathcal{V}_i}$, its valuation space is closed under capping with the budget functions, and the actions in the support of the smoothness deviations satisfy $\hat{u}_i(a_i^*, \mathbf{a}_{-i})\geq0, \forall \mathbf{a}_{-i}, \forall i$, then the mechanism is $(\lambda/2, \mu)$-smooth with respect to risk-averse budgeted utility functions $(u_i^{\theta_i})_{\theta_i\in\Theta_i, i\in N}$.
\end{lemma}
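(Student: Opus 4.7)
The plan is to reduce the budgeted, risk-averse statement to the non-budgeted risk-averse smoothness (Lemma~\ref{lemma:smooth}) by passing to the \emph{capped} valuations $\tilde v_i(\mathbf{x}) := \min\{v_i(\mathbf{x}), B_i(\mathbf{x})\}$. The closure-under-capping hypothesis is precisely what lets us invoke quasilinear $(\lambda,\mu)$-smoothness for the valuation profile $\tilde{\mathbf{v}}$: for any action profile $\mathbf{a}$ we obtain randomized deviations $a_i^*(\tilde{\mathbf{v}}, a_i)$ whose support satisfies the non-negative-utility condition $\hat u_i^{\tilde v_i}(a_i^*,\mathbf{a}_{-i})\geq 0$, and which obey
\[
\sum_i \hat u_i^{\tilde v_i}(a_i^*,\mathbf{a}_{-i}) \;\geq\; \lambda\,\widehat\OPT(\tilde{\mathbf{v}}) - \mu\sum_i p_i(\mathbf{a}).
\]
We will use these same randomized actions as the deviations certifying smoothness for the type profile $\boldsymbol\theta=(\mathbf{v},\mathbf{B})$.

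The second step is to verify that these deviations never trigger the $-\infty$ branch of $u_i^{\theta_i}$ and that they translate into a pointwise lower bound on the risk-averse utility. Non-negativity of $\hat u_i^{\tilde v_i}$ in the support gives $p_i(a_i^*,\mathbf{a}_{-i}) \leq \tilde v_i(X(a_i^*,\mathbf{a}_{-i})) \leq B_i(X(a_i^*,\mathbf{a}_{-i}))$, so the budget constraint is respected and $u_i^{\theta_i}(a_i^*,\mathbf{a}_{-i}) = u_i^{v_i}(a_i^*,\mathbf{a}_{-i})$. Since also $p_i\leq\tilde v_i\leq v_i$, Property~\ref{prop:third} of the normalized risk-averse utility yields
\[
u_i^{\theta_i}(a_i^*,\mathbf{a}_{-i}) \;\geq\; v_i - p_i \;\geq\; \tilde v_i - p_i \;=\; \hat u_i^{\tilde v_i}(a_i^*,\mathbf{a}_{-i}).
\]

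The step I expect to be the main obstacle is relating $\widehat\OPT(\tilde{\mathbf{v}})$ to the effective welfare benchmark $\OPT(\boldsymbol\theta)=\max_{\mathbf{x},\mathbf{p}}\sum_i \min\{u_i^{\theta_i}(\mathbf{x},p_i)+p_i,\, B_i(\mathbf{x})\}$, because it requires carefully accounting for how the outer $\min$ with $B_i$ interacts with the doubling from Lemma~\ref{lemma:OPT_relation}. I would handle this via a per-agent case analysis: if $p_i>B_i(\mathbf{x})$ the summand is $-\infty$; if $p_i\leq v_i(\mathbf{x})$, monotonicity and normalization give $u_i^{v_i}(\mathbf{x},p_i)+p_i\leq 2v_i(\mathbf{x})$, and since the summand is also $\leq B_i(\mathbf{x})$, it is at most $\min\{2v_i(\mathbf{x}),B_i(\mathbf{x})\}\leq 2\min\{v_i(\mathbf{x}),B_i(\mathbf{x})\}=2\tilde v_i(\mathbf{x})$; if $v_i(\mathbf{x})<p_i\leq B_i(\mathbf{x})$, relaxed concavity gives $u_i^{v_i}(\mathbf{x},p_i)+p_i\leq v_i(\mathbf{x})$, and the $\min$ with $B_i$ again yields a bound of $\tilde v_i(\mathbf{x})\leq 2\tilde v_i(\mathbf{x})$. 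Summing and maximizing over $(\mathbf{x},\mathbf{p})$ yields $\OPT(\boldsymbol\theta)\leq 2\,\widehat\OPT(\tilde{\mathbf{v}})$.

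Chaining the three ingredients,
\[
\sum_i u_i^{\theta_i}(a_i^*,\mathbf{a}_{-i}) \;\geq\; \sum_i \hat u_i^{\tilde v_i}(a_i^*,\mathbf{a}_{-i}) \;\geq\; \lambda\,\widehat\OPT(\tilde{\mathbf{v}}) - \mu\sum_i p_i(\mathbf{a}) \;\geq\; \tfrac{\lambda}{2}\,\OPT(\boldsymbol\theta) - \mu\sum_i p_i(\mathbf{a}),
\]
which is exactly $(\lambda/2,\mu)$-smoothness with respect to the risk-averse budgeted utilities, proving the lemma. Theorem~\ref{thm:budgets_risk} then follows by feeding this into Theorem~\ref{thm:smoothness_CCE_and_BNE}.
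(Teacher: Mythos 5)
Your proposal is correct and follows essentially the same route as the paper: cap the valuations, use closure under capping to obtain quasilinear smoothness deviations with non-negative utility (which also keeps payments within budget, avoiding the $-\infty$ branch), lower-bound the risk-averse budgeted utility by the capped quasilinear one via relaxed concavity, and finish with the factor-2 benchmark comparison. The only difference is cosmetic: you bound $u_i^{\theta_i}\geq \tilde v_i - p_i$ directly from Property~\ref{prop:third} rather than via the intermediate capped risk-averse utility, and you spell out the per-agent case analysis showing the effective-welfare benchmark is at most $2\,\widehat\OPT(\tilde{\mathbf{v}})$, a step the paper only asserts by invoking Lemma~\ref{lemma:OPT_relation} for capped valuations.
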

\begin{proof}
We start from an arbitrary action profile $\mathbf{a}$ and keep in mind that the risk-averse budgeted utility function $u_i^{\theta_i}$ has type $\theta_i=(v_i, B_i)$. By $\hat{u}^{\bar{v}_i}$, we denote the quasilinear utility of player $i$ with the capped valuation function $\bar{v}_i$. Formally,
\[
\hat{u}_i^{\bar{v}_i}(\mathbf{x},p_i)=\bar{v}_i(\mathbf{x})-p_i=\min\{v_i(\mathbf{x}), B_i(\mathbf{x})\}-p_i\enspace.
\]
Since the valuation space is closed under capping with the budget function, we can find smoothness deviations $a_i^*(\bar{\mathbf{v}}, a_i)$ s.t. $\hat{u}_i^{\bar{v}_i}(X(a_i^*, \mathbf{a}_{-i}),p_i(a_i^*, \mathbf{a}_{-i}))=\bar{v}_i(X(a_i^*, \mathbf{a}_{-i}))-p_i(a_i^*, \mathbf{a}_{-i})\geq0, \forall \mathbf{a}_{-i}$ and therefore $u_i^{\bar{v}_i}(a_i^*, \mathbf{a}_{-i})\ge \hat{u}_i^{\bar{v}_i}(a_i^*, \mathbf{a}_{-i})$. It follows that
\begin{align*}
\sum_i u_i^{\theta_i}(a_i^*(\bar{\mathbf{v}}, a_i), \mathbf{a}_{-i})&= \sum_i u_i^{v_i}(a_i^*(\bar{\mathbf{v}}, a_i), \mathbf{a}_{-i})
\ge \sum_i u_i^{\bar{v}_i}(a_i^*, \mathbf{a}_{-i})
\ge \sum_i \hat{u}_i^{\bar{v}_i}(a_i^*, \mathbf{a}_{-i})\\
&\ge \lambda\cdot \widehat{\OPT}_{\bar{\mathbf{v}}} - \mu\cdot\sum_i p_i(\mathbf{a})\ge\frac{\lambda}{2}\cdot \OPT_{\bar{\mathbf{v}}} - \mu\cdot\sum_i p_i(\mathbf{a}) \enspace,
\end{align*}
where the first equality holds because the deviations are such that the payments are below the budgets, the first inequality because 
\[
u_i^{v_i}(\mathbf{a})=h\left(v_i(\mathbf{x}(\mathbf{a})-p_i(\mathbf{a})\right)\ge h(\min\{v_i(\mathbf{x}(\mathbf{a})), B_i\}-p_i(\mathbf{a})) = u_i^{\bar{v}_i}(\mathbf{a}), \forall \mathbf{a}\enspace,
\]
and the third because the valuation space is closed under capping. 
\end{proof}

The generality of Theorem~\ref{thm:smoothness_CCE_and_BNE} allows us to now obtain Theorem~\ref{thm:budgets_risk}. Note that $\OPT_{\bar{\mathbf{v}}}$, where $\bar{\mathbf{v}}$ is the vector of capped valuation functions, indeed aligns correctly with the effective welfare benchmark.

\section{Missing Details from Section~\ref{sec:negative-results}}\label{sec:calculations}
\subsection{Calculations}\label{subsec:calculations}
\begin{align*}
\mathbb{E}[u_3(b_3', \mathbf{b}_{-3})] &\leq (1 - \epsilon)\cdot(\frac{1}{3}\cdot\ln M/2 - b_3') + \epsilon\cdot(- (16\cdot\frac{1}{3}\cdot\ln M/2 -1)\cdot M^2 \cdot b_3')\\
 &< \frac{1}{3}\cdot\ln M/2 - b_3' - \frac{1}{M^2}\cdot \big(16\cdot\frac{1}{3}\cdot\ln M/2 -1\big)\cdot M^2 \cdot b_3'\\
&=\frac{1}{3}\cdot\ln M/2 - b_3' (1 + 16\cdot\frac{1}{3}\cdot\ln M/2 -1)\\
&< \frac{1}{3}\ln M/2 - \frac{1}{16}\cdot 16\cdot\frac{1}{3}\ln M/2 = 0\enspace.
\end{align*}

\begin{align*}\label{eq:bid_estimate}
\beta(v_1)&\ge \int_{1/2}^{v_1} \frac{2(1-\frac{M-1}{M^2})(e^t-1)}{2(t-\frac{1}{2})(1-\frac{M-1}{M^2})+ (1-2(t-\frac{1}{2})(1-\frac{M-1}{M^2}))\cdot e^t} dt\\
&>\int_{1/2}^{v_1}\frac{\frac{3}{2}(e^t-1)}{2t-1 + 2\cdot e^t} dt=\frac{3}{4}\int_{1/2}^{v_1}\frac{e^t-1}{e^t+t-\frac{1}{2}}dt\\
&\ge \frac{3}{4}\int_{1/2}^{v_1} \left(1-\frac{1}{\sqrt{e}}\right) dt=\frac{3}{4}\left(1-\frac{1}{\sqrt{e}}\right)\left( v_1 - \frac{1}{2} \right) >\frac{1}{4} \left( v_1 - \frac{1}{2} \right) \enspace.
\end{align*}

\begin{align*}
\mathbb{E}[u_3(b_3', \mathbf{b}_{-3})] &\le\Pr[\beta(v_1) \leq b_3']\cdot(v_3-b_3')-\Pr[\beta(v_1) > b_3']\cdot 32\cdot v_3\cdot b_3' \\
& <2\cdot 4b_3'(v_3-b_3')-(1-2\cdot 4b_3')\cdot 32\cdot v_3\cdot b_3'\\
&=8b_3'v_3-8(b_3')^2-32b_3'v_3+8\cdot32v_3(b_3')^2 \\
&=8b_3'\big(-3v_3+b_3'\big(32v_3-1\big)\big)\le8b_3'\left(-3v_3 + 2v_3 - \frac{1}{16}\right)\\
&=8b_3'\left(-v_3-\frac{1}{16}\right)<0\enspace.
\end{align*}

\subsection{All-Pay Auction with Limited Risk-Aversion}\label{sec:bounded_slope}
\begin{theorem}
In an all-pay auction with risk-averse players whose utilities are of the form $h(v_i(\mathbf{x})-p_i)$, where $h$ is a concave function s.t. $h(x)=C\cdot x$ for $x<0$, $C\ge1$ constant, the Price of Anarchy is at most $4(C+1)$.
\end{theorem}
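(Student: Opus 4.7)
The plan is to adapt the smoothness argument for all-pay auctions to the present setting, tracking how the bounded slope $C$ of $h$ on $\mathbb{R}_{<0}$ enters. Recall that in Theorem~\ref{theorem:all-pay} the PoA was made unbounded by allowing an arbitrarily large negative-side slope; once this slope is capped at $C$, a one-shot deterministic deviation for the top-value bidder calibrated to $C$ should recover a constant-fraction welfare guarantee.

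Fix a valuation profile $\mathbf{v}$, let $i^\ast$ be an agent attaining $v^\ast = \max_i v_i$, so $\widehat\OPT = v^\ast$ and by Lemma~\ref{lemma:OPT_relation} also $\OPT \le 2v^\ast$. I would consider the smoothness deviation in which $i^\ast$ bids the deterministic amount $b^\ast = v^\ast/(2(C+1))$ while every other player bids $0$. For any fixed action profile $\mathbf{a}$, write $B = \max_{j \ne i^\ast} a_j$. By $h(0)=0$, $h(v^\ast)=v^\ast$ and concavity, $h(y) \ge y$ on $[0,v^\ast]$, so if $B < b^\ast$ the deviation wins and $i^\ast$'s utility is at least $v^\ast - b^\ast \ge v^\ast/2$. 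If $B \ge b^\ast$ the deviation loses and the hypothesis gives utility exactly $-Cb^\ast$, while the total payment in the original profile satisfies $\sum_j p_j(\mathbf{a}) \ge B \ge b^\ast$. Other players bidding $0$ contribute $0$ utility. With the coefficient $\mu = 2C+1$, the identity $\mu b^\ast - Cb^\ast = (C+1)b^\ast = v^\ast/2$ makes both branches yield $\sum_i u_i(a_i^\ast,\mathbf{a}_{-i}) + \mu \sum_j p_j(\mathbf{a}) \ge v^\ast/2$. This is $(\lambda,\mu)$-smoothness relative to $\widehat\OPT$ with $\lambda = 1/2$. Applying the type-sampling lifting used to prove Theorem~\ref{thm:smoothness_CCE_and_BNE} transfers the inequality to correlated and Bayes-Nash equilibria, and combining with Lemma~\ref{lemma:OPT_relation} (to benchmark against the risk-averse $\OPT$) delivers the claimed $\mathrm{PoA} \le 4(C+1)$.

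The main obstacle is the joint calibration of the pair $(b^\ast,\mu)$: the bid $b^\ast$ must be small enough that the winning-utility lower bound $v^\ast - b^\ast$ remains a constant fraction of $v^\ast$, yet large enough for the losing-branch payment lower bound $\sum_j p_j \ge b^\ast$ to be nontrivial, while $\mu$ must be large enough for $\mu b^\ast$ to dominate the losing penalty $Cb^\ast$ by the required margin. The choice $b^\ast = v^\ast/(2(C+1))$ and $\mu = 2C+1$ is the unique pair that makes the two regimes balance at $v^\ast/2$, and it is precisely this balance that forces the linear dependence of the PoA on $C+1$. A sharper constant can be extracted by replacing the deterministic deviation by a uniformly randomized bid on $[0,2v^\ast/(C+1)]$, but the deterministic analysis above already suffices for the $O(C+1)$ order stated in the theorem.
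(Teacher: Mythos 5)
Your overall strategy is exactly the paper's: a smoothness deviation by the highest-value bidder (others bid $0$), a case split on whether that deviation wins, Lemma~\ref{lemma:OPT_relation} to pass from $\widehat\OPT$ to $\OPT$, and Theorem~\ref{thm:smoothness_CCE_and_BNE} to conclude. The case analysis you give is sound, but your calibration does not deliver the stated constant. With your choice $b^\ast = v^\ast/(2(C+1))$ you are forced to take $\mu = 2C+1$, so after halving $\lambda$ via Lemma~\ref{lemma:OPT_relation} you have $(\lambda,\mu) = (\tfrac14,\, 2C+1)$-smoothness, and Theorem~\ref{thm:smoothness_CCE_and_BNE} then yields $\mathrm{PoA} \le \max\{1,\mu\}/\lambda = 4(2C+1)$, not the claimed $4(C+1)$. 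So the final sentence of your second paragraph overstates what the computation gives; you have proved the right order $O(C)$ but a strictly weaker bound than the theorem asserts, and your claim that $(b^\ast,\mu)=(v^\ast/(2(C+1)),\,2C+1)$ is the \emph{unique} balancing pair is false.

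The paper's calibration is the better one: the deviator bids $b^\ast = v^\ast/2$ with $\mu = C+1$. The point you miss is that when this larger deviation loses, some other player must be bidding, and hence (all-pay) paying, at least $v^\ast/2$, so the payment term is large precisely in the branch where the deviator's utility is $h(-v^\ast/2) = -Cv^\ast/2$. Concretely,
\[
u_{i^\ast}^{v^\ast}\!\left(\tfrac{v^\ast}{2}, \mathbf{a}_{-i^\ast}\right) \ge \tfrac12 v^\ast - (C+1)\max_{j\neq i^\ast} a_j \ge \tfrac12 \widehat\OPT - (C+1)\sum_j p_j(\mathbf{a})
\]
holds in both branches (winning: the left utility is at least $v^\ast/2$ by normalization; losing: $\max_{j\ne i^\ast}a_j \ge v^\ast/2$, so the middle expression is at most $-Cv^\ast/2$). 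This gives $(\tfrac12, C+1)$-smoothness w.r.t.\ $\widehat\OPT$, hence $(\tfrac14, C+1)$ w.r.t.\ $\OPT$, and Theorem~\ref{thm:smoothness_CCE_and_BNE} gives exactly $4(C+1)$. Your argument can be repaired simply by replacing your bid and coefficient with these; nothing else in your write-up needs to change.
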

\begin{proof}
We use the following smoothness deviation: the highest value player with value $v_h$ deviates to $\frac{1}{2}v_h$ and everybody else to $0$. Now, it is easy to see that the following inequality holds independent of whether the highest value player obtains the item or not
\[
u_h^{v_h}(\frac{v_h}{2}, \mathbf{a}_{-i})\ge \frac{1}{2}v_h - (C+1)\max_{i\neq h}a_i\ge \frac{1}{2}\widehat{\OPT} - (C+1)\sum_i a_i\enspace,
\]
so then
\[
\sum_i u_i^{v_i}(a_i^*, \mathbf{a}_{-i}) \ge \frac{1}{2}\widehat{\OPT} - (C+1)\sum_i p_i(\mathbf{a})\ge \frac{1}{4}\OPT - (C+1)\sum_i p_i(\mathbf{a})\enspace.
\]
The claim follows by applying Theorem~\ref{thm:smoothness_CCE_and_BNE}.
\end{proof}

\subsection{Symmetric BNE of All-Pay Auction in the Quasilinear Setting}\label{sec:quasilinear_allpay}
\begin{claim}
In a symmetric BNE of the all-pay auction in the quasilinear setting, all bids are bounded by the expected value of a player.
\end{claim}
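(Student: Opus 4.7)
The plan is to explicitly characterize the symmetric equilibrium bidding function $\beta$ via the standard first-order-condition argument for all-pay auctions, and then recognize the resulting closed form as an expectation that is bounded by $\mathbb{E}[V]$.

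First I would assume, as is standard, that $\beta\colon\mathcal{V}\to\mathbb{R}_{\geq 0}$ is (strictly) increasing and differentiable with $\beta(0)=0$; this follows from $F$ being continuous (no atoms), which is the relevant case for the distribution used in Section~\ref{sec:negative-results}. Focusing on the two-player regime that is the point of comparison in that section, a player with value $v$ who submits bid $b$ has expected quasilinear utility
\begin{equation*}
\hat{u}(v,b) \;=\; v\cdot F(\beta^{-1}(b)) - b .
\end{equation*}
Setting the derivative in $b$ to zero at $b=\beta(v)$ and using $\frac{d}{db}\beta^{-1}(b)=1/\beta'(\beta^{-1}(b))$ produces the ODE $\beta'(v)=v\,f(v)$. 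Integrating from $0$ with $\beta(0)=0$ yields
\begin{equation*}
\beta(v) \;=\; \int_0^v t\,f(t)\,dt \;=\; \mathbb{E}\!\left[V\cdot\mathbb{1}\{V\leq v\}\right] \;\leq\; \mathbb{E}[V] ,
\end{equation*}
which is the stated bound. For completeness, the same derivation in the $n$-player case gives $\beta(v)=\int_0^v t\,(n-1)F(t)^{n-2}f(t)\,dt$, i.e., the truncated expectation of $Y_{n-1}:=\max\{V_1,\dots,V_{n-1}\}$; the bound by $\mathbb{E}[V]$ is then precisely the two-player instance, which is the one relevant to the construction in Section~\ref{sec:negative-results}.

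The only real subtlety is justifying that $\beta$ must be strictly increasing and that the first-order condition characterizes a global best response (rather than merely a local extremum). This is standard whenever the value distribution has a continuous density with connected support, and it applies directly to the piecewise-uniform distribution of Section~\ref{sec:negative-results}; one verifies the second-order / single-crossing condition by checking that the map $b\mapsto v\cdot F(\beta^{-1}(b))-b$ is quasi-concave in $b$ once $\beta'$ has the form above, which is immediate from monotonicity of $v\,f(v)$-like terms along the implied ODE. With that regularity in place, the closed-form computation above completes the proof.
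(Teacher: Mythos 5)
Your proposal follows essentially the same route as the paper's own proof: fix a value, write the expected utility as $x\,F(\beta^{-1}(y))-y$, apply the first-order condition to obtain $\beta'(v)=v\,f(v)$, and integrate to get $\beta(v)=\int_0^v t\,f(t)\,dt\le\mathbb{E}[V]$ (the paper, like you, works with the two-player case, which is the one relevant to the construction). Your added remarks on monotonicity, global optimality of the first-order condition, and the $n$-player formula are fine but not needed beyond what the paper does.
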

\begin{proof}
Due to symmetry, it is enough to argue about the first player. Let $\beta$ denote the equilibrium bidding function. We fix player 1's value $v_1=x$ and consider his expected utility for bidding $y$:
\begin{align*}
\mathbb{E}[u_1^{x}(b_1=y, b_2=\beta(v_2))] &= \Pr[\beta(v_2)<y]\cdot(x-y) + \Pr[\beta(v_2)\ge y]\cdot (-y)\\
&= \Pr[\beta(v_2)<y]\cdot x - y = F(\beta^{-1}(y))\cdot x -y\enspace.
\end{align*}
By taking the derivative and setting it to zero, we arrive at 
\[
\frac{f(\beta^{-1}(b))}{\beta'(\beta^{-1}(b))}\cdot x -1 =0\enspace,
\]
so 
\[
\beta'(x)=x\cdot f(x)\enspace.
\]
Now it is obvious that 
\[
\beta(x)=\int_0^x t\cdot f(t) \le \mathbb{E}[v_1]\enspace.
\]
\end{proof}

\section{Proof of Observation~\ref{lemma:observation}}\label{sec:observation}
\begin{proof}
Consider two bidders that both have a valuation of $1$. If they both bid $1$ with probability $\frac{1}{2}$ and $0$ with the remaining probability, but in a correlated manner, such that always just one player submits a non-zero bid -- they will be in an equilibrium. Let us now calculate the utilities:
\[
u_i(\mathbf{b}) = \mathbb{E}_{\mathbf{a}\sim\mathbf{b}}[\hat{u}_i(\mathbf{a})] - \sqrt{\mathbb{E}[\hat{u}_i^2(\mathbf{a})] - (\mathbb{E}[\hat{u}_i(\mathbf{a})])^2} = \frac{1}{2} - \sqrt{\frac{1}{2} - \Big(\frac{1}{2}\Big)^2} = \frac{1}{2}-\frac{1}{2}=0\enspace.
\]
Since the payments are also $0$, the social welfare in this equilibrium is $0$, meaning that the price of anarchy is unbounded.
\end{proof}

\end{document}